\newtheorem{ensemble}{Ensemble}
\newtheorem*{ensemble*}{Ensemble}
\newtheorem*{theorem*}{Theorem}
\newcommand{\lr}[1]{\langle #1 \rangle}
\newcommand{\multiset}[1]{\{\!\{#1\}\!\}}
\newcommand{\fz}{\hat{\newf}}
\newcommand{\newf}{\bm{f}}
\newcommand{\ignore}[1]{}
\newcommand{\ens}{\textsc{ens}}
\newcommand{\scn}{\ens}
\newcommand{\alg}{$\mathcal A$}
\newcommand{\threshold}{\emph {t}}
\newcommand{\tp}{\mathsf{T}}
\newcommand{\act}{\alpha}
\newcommand{\Child}{{\textsc Child}}
\newcommand{\ls}[1]{\emph{l$_{#1}$}}
\newcommand{\mult}[2]{\textit{mult}(#1,#2)}
\newcommand{\maxmult}[1]{\textit{max\_mult}(#1)}
\newcommand{\Vin}{\mathcal{V}_{\textit{in}}}
\newcommand{\Oracle}{{\tt p\_ob}}
\newcommand{\M}[2]{%
  \ifthenelse{\equal{#1}{#2}}%
    {[#1]}
    {[#1,#2]}
}
\newcommand{\optimal}[1]{%
  \ifthenelse{\equal{#1}{C}}%
    {Qualitative\xspace}
    {qualitative\xspace}
}
\def\HiLi{\leavevmode\rlap{\hbox to
\hsize{\color{gray!10}\leaders\hrule height .8\baselineskip
depth .5ex\hfill}}}
\tikzset{%
  remember picture with id/.style={%
    remember picture,
    overlay,
    save picture id=#1,
  },
  save picture id/.code={%
    \edef\pgf@temp{#1}%
    \immediate\write\pgfutil@auxout{%
      \noexpand\savepointas{\pgf@temp}{\pgfpictureid}}%
  },
  if picture id/.code args={#1#2#3}{%
    \@ifundefined{save@pt@#1}{%
      \pgfkeysalso{#3}%
    }{
      \pgfkeysalso{#2}%
    }
  }
}
\def\savepointas#1#2{%
  \expandafter\gdef\csname save@pt@#1\endcsname{#2}%
}
\def\tmk@labeldef#1,#2\@nil{%
  \def\tmk@label{#1}%
  \def\tmk@def{#2}%
}
\newlength\AlgIndent
\newcounter{mymark}
\newcommand\ColorLine{%
  \stepcounter{mymark}%
  \tikz[remember picture with id=mark-\themymark,overlay] {;}%
  \begin{tikzpicture}[remember picture,overlay]%
    \filldraw[gray!20]%
   let \p1=(pic cs:mark-\themymark),
   \p2=(current page.east)  in
   ([xshift=-\ALG@thistlm-0em,yshift=-0.7ex]0,\y1)  rectangle
   ++(\linewidth+\AlgIndent,\baselineskip); \end{tikzpicture}%
}%
\newcommand\ColorLinex{%
  \stepcounter{mymark}%
  \tikz[remember picture with id=mark-\themymark,overlay] {;}%
  \begin{tikzpicture}[remember picture,overlay]%
    \filldraw[gray!20]%
   let \p1=(pic cs:mark-\themymark),
   \p2=(current page.east)  in
   ([xshift=-\ALG@thistlm--3em,yshift=-0.7ex]0,\y1)
   rectangle ++(\linewidth+\AlgIndent,\baselineskip); \end{tikzpicture}%
}%
\algnewcommand\CREQUIRE{\item[\setlength\AlgIndent{1.6em}\ColorLine\algorithmicrequire]}%
\algnewcommand\CENSURE{\item[\setlength\AlgIndent{1.6em}\ColorLine\algorithmicensure]}%
\algnewcommand\CSTATE{\State\ColorLine}%
\algnewcommand\CSTATEx{\Statex\ColorLinex}%
\algnewcommand\CCOMMENT{\Comment\ColorLine}%
\title{Probabilistic Indistinguishability and the\\ Quality of Validity in Byzantine Agreement}
\titlerunning{Probabilistic Indistinguishability \&  Qualitative Validity} 
\author{Guy Goren}{Technion}{}{}{}
\author{Yoram Moses}{Technion}{}{}{}
\author{Alexander Spiegelman}{Novi Research}{}{}{}
\authorrunning{G. Goren, Y. Moses, and A. Spiegelman} 
\keywords{Indistinguishability, probabilistic lower bounds, Byzantine agreement.} 
\begin{document}

\maketitle

\begin{abstract}
	Lower bounds and impossibility results in distributed computing are both intellectually challenging and practically important.
	Hundreds if not thousands of proofs appear in the literature, but surprisingly, the vast majority of them apply to deterministic algorithms only.
	Probabilistic protocols have been around for at least four decades and are receiving a lot of attention with the emergence of blockchain systems.
	Nonetheless, we are aware of only a  handful of randomized lower bounds.
	
	In this paper we provide a formal framework for reasoning about randomized distributed algorithms.
	We generalize the notion of indistinguishability, the most useful tool in deterministic lower bounds, to apply to a probabilistic setting.
	We apply this framework to prove a result of independent interest. Namely, we completely characterize  the quality of decisions that protocols for a randomized  multi-valued Consensus problem can guarantee in an asynchronous environment with Byzantine faults.
	We use the new notion to prove a lower bound on the probability at which it can be guaranteed that honest parties will not decide on a possibly bogus value. Finally, we show that the bound is tight by providing a protocol that matches it. 
\end{abstract}

\section{Introduction}
\label{sec:intro}
Randomized  algorithms have a long tradition in distributed computing~\cite{rabin1980probabilistic}, 
where they have been applied to many different problems in a variety of models~\cite{lynch1996distributed}.
In the context of fault-tolerant agreement they have served to overcome the  impossibility of 
agreement in asynchronous settings
\cite{FLP85,rabin1983randomized,ben1983another}, and have
significantly improved efficiency compared to deterministic
solutions~\cite{feldman1997optimal, katz2006expected}.
With the recent prevalence  of blockchain systems, Byzantine agreement algorithms that can overcome malicious parties have found renewed interest in both industry and academia.
For obvious reasons, blockchain systems should strive to minimize the share of decisions that originate from malicious parties, and to increase the share originating from honest ones. A natural question, then, is 
what are the inherent limits on the quality of Byzantine
agreement algorithms in this regard? Namely, what can we say about the
probability with which an algorithm can guarantee that a good decision is made?

Given their practical importance,  characterizing the power
and limitations of randomized  distributed algorithms for agreement has
become ever more desirable.
However, obtaining tight, or nontrivial, probabilistic bounds on
properties in the asynchronous Byzantine setting can be a challenging
task.
As is well known, there are
``{\it Hundreds of impossibility results for distributed
computing}''~\cite{fich2003hundreds}. But very few of them apply to randomized
protocols.  
Unfortunately, there is currently a dearth of general tools for
characterizing the properties of randomized algorithms.

The notion of indistinguishability has for years been one of
the most useful tools for proving deterministic lower bounds and
impossibility results in distributed
computing~\cite{attiya2014impossibility}.
Such deterministic lower bounds typically rely on the fact that
if a correct party cannot distinguish between two executions of a
deterministic protocol (i.e., its local state is the same in both),
then it performs the same actions in both.
In a randomized algorithm, the fact that two executions are
indistinguishable to a given party up to a certain time does not ensure
that its next action will be the same in both. Moreover, a single
execution does not, by itself, provide information on the probability with which actions are performed.
As a result, the classic notion of indistinguishability does not directly capture many of the probabilistic aspects of a randomized algorithm. 

Of course, probabilistic properties of distributed algorithms such as 
	``{\it the probability that the parties decide on a value proposed by an honest party is at least~$x$}''
or ``{\it all honest parties terminate with
		probability 1}'' cannot be evaluated based on an individual execution.
Clearly, to make formal sense of such statements, we need to define an appropriate  probability space. However, due to the nondeterminism inherent in our model, a probability space over the set of all executions cannot be defined (cf.~\cite{Pnueli83}). This is because  we can't assume a distribution over the initial configurations, and similarly there is no well-defined distribution on the actions of the adversary, who is in charge of all the nondeterministic decisions.   
Once we fix the adversary's strategy, we are left with a purely probabilistic structure, which we call an {\em ensemble}. 
An ensemble naturally induces a probability space.
This allows us to formally state probabilistic properties of an algorithm \alg{} of interest with respect to all of its ensembles ($=$~adversary strategies). E.g., {\it ``for every ensemble of algorithm~\alg, all honest parties terminate with	probability~1.''}


In deterministic algorithms,  indistinguishability among executions is determined based on a party $p_i$'s local history, i.e., the sequence of local states that~$p_i$ passes through in the executions. 
We generalize the notion of  an $i$-local history to a notion called an {\it $i$-local ensembles}. A local ensemble is a tree of local states, that captures subtle, albeit essential, aspects of probabilistic protocols. 
This facilitates the definition of a notion of \textbf{\textit{probabilistic indistinguishability}} among ensembles, whereby 
two ensembles are considered indistinguishable to a process~$p_i$ if they induce identical $i$-local ensembles. 
Indistinguishability among ensembles provides a formal and convenient framework that 
can be used to simplify existing lower bound proofs in a probabilistic setting, and to prove new ones. 
A significant feature of this framework is its simplicity and ease of use, allowing similar arguments as in the deterministic case. 
The notions contain just enough structure beyond that of their deterministic analogues to capture the desired probabilistic properties. 

Our original motivation for developing the above framework was to formally prove 
tight probabilistic bounds on  the share of good decisions made by a  randomized 
Byzantine agreement algorithm in an asynchronous  setting. 
In
\Cref{sec:Validity} we use probabilistic indistinguishability to prove that, roughly
speaking, no algorithm can guarantee that the probability  to decide on
a genuine input value  is greater than $1 - \frac{f}{n-t}$. (As usual, $n$ is
the total number of parties here, while~$t$ and~$f$ are the maximal and actual
number of failures, respectively.) 
Moreover,  this bound is shown to be tight, by presenting an algorithm that achieves it.

This paper makes two distinct and complementary main contributions:
\begin{itemize}
\item We define a notion of  indistinguishability that generalizes its deterministic counterpart, and is suitable for proving lower bounds in the context of probabilistic protocols.
	A new element in our definition is a purely probabilistic tree whose paths represent local histories of a given process.
	The resulting framework provides an	intuitive and rigorous way to reason about probabilistic properties of such protocols.
 
 \item We introduce \emph{\optimal{C} Validity}, a new probabilistic  validity condition for the Byzantine agreement problem. It provides a probabilistic bound on the ability of corrupt parties to bias the decision values, which is of interest in the blockchain arena. We prove that, in a precise sense, it is the strongest achievable validity property in the  asynchronous setting. Both the statement of the property and the proof are faciltated by our  new framework. 
 
\end{itemize}

%

\section{Related Work}
\label{sec:related}

\noindent\textbf{Deterministic indistinguishability in distributed computing.}
The main inherent limitation of distributed computing
leveraged by most proofs is the lack of global
knowledge~\cite{lynch1989hundred, fagin2003reasoning}. 
That is, each party needs to evaluate the global state of the system
based on its local state and act accordingly.
Deterministic indistinguishability captures exactly that: if
two executions of a deterministic protocol looks the same from some
party's local point of view, then this party performs exactly the same
sequence of actions in both.
Good surveys of techniques used in these proofs are
presented in~\cite{attiya2014impossibility, fich2003hundreds, lynch1989hundred}. 
They all utilize indistinguishable executions, but differ in the way
they construct them.\\ 

\noindent\textbf{Lower bound approaches for randomized distributed protocols.}
There are several lower bound and impossibility results for randomized
algorithms in the literature.
One approach is to reduce the distributed problem into a cleaner
mathematical one that abstracts away the issues of local knowledge, and
then apply methodologies from other fields to the latter
problem~\cite{aspnes1998lower, kushilevitz1993lower, bar1998tight,
attiya2008tight}.
An example of such a mathematical tool, used in~\cite{aspnes1998lower,
bar1998tight, attiya2008tight}, is a form of collective coin-fipping
game~\cite{ben1985collective}, which is an algorithm for combining
local coins into a single global one, whose bias should be small.
Each of the participants flips local coins and submits them to an
adversary. The adversary then gets to decide which coins to reveal and which to hide. 
The adversary's purpose is to bias the global coin's output, while
hiding as few local flips as possible.
Lower bounds on the number of coin flips required to tolerate an
adversary as a function of his budget to hide coins are proven
in~\cite{aspnes1998lower, bar1998tight}.
They relate these bounds to 
randomized distributed algorithms in the following way:
Intuitively, each step of the coin-flipping game corresponds to an
execution of the distributed algorithm up to some random event, which
can be interpreted as the flipping of a local coin.
The adversary's choice to hide or reveal this local coin corresponds to
its power to kill the process that executes the random
event or to let it run.

Aspnes~\cite{aspnes1998lower} extended the valency arguments introduced
in FLP~\cite{FLP85} to the probabilistic setting and used coin flipping
games to prove an $\Omega(\frac{n^2}{\log^2(n)})$ lower bound on
the expected step complexity of solving asynchronous randomized
Consensus.
Bar-Joseph and Ben-Or~\cite{bar1998tight} applied this technique to the
synchronous setting and proved an $\Omega(\sqrt{n/\log(n)})$ lower bound
on the number of rounds required for Consensus, in expectation, under a
worst case adversary.
Attiya and Censor-Hillel~\cite{attiya2008tight} closed a gap left in
Aspnes's work~\cite{aspnes1998lower} by showing that $\Theta(n^2)$ is a
tight bound on the step complexity for Consensus in the shared memory model.
For the lower bound, they elegantly combined valency arguments and coin
flipping games with a layering technique introduced
in~\cite{moses1998unified} for the deterministic case.
They restricted their adversarial strategies to proceed in layers of at
least $n-t$ parties. 
That is, for any given configuration, parties flip local coins and then 
the adversary picks at least $n-t$ parties and lets each of them 
perform a step, which collectively leads to the next configuration.
To capture the algorithm's randomness, they compute the probability to
decide $v$ from some configuration under a given adversary by
summarizing products of probability spaces induced by reachable
configurations in executions in which $v$ is decided.
To define the valence of a configuration under a set $S$ of possible
adversarial strategies, they check whether, starting from the given configuration, 
there are adversaries in
$S$ that lead to decisions with a probability higher than some threshold.
While the elegant way in which they treat the randomness of Consensus
algorithms allows them to distinguish among \emph{bivalent},
\emph{v-valent}, and \emph{null-valent} configurations as required by
their proof, it is not clear how their formalization can be
conveniently applied to other problems.

Another approach to proving randomized distributed lower bounds is via reducing 
the argument to deterministic indistinguishability by considering a
fixed random tape of coin flips to abstract away the randomness from an
execution~\cite{attiya2010lower, chor1989simple}.
An execution is deterministically defined by the initial configuration,
scheduler, and a fixed tape of coin flips.  
Given two executions, the standard indistinguishability argument can be
applied.
Attiya and Censor-Hillel~\cite{attiya2010lower}
proved a trade-off between termination probability and step
complexity of randomized Consensus algorithms in asynchronous systems.
To prove their result, they showed that for any random tape there is an
indistinguishability chain that starts and ends with executions that do
not allow the same decision value due to validity.
Assuming a minimal probability for termination per scheduler and
initial configuration, according to the tapes distribution, they show
that for at least one tape all the executions in the chain terminate.
By the indistinguishability chain (and the agreement condition) the decision value on
the two ends of the chain is the same, which contradicts validity.
Their argument can be restructured  within our framework
in a manner that perhaps brings their proof closer to the intuition.
An ensemble gathers all possible tapes for a given
adversarial strategy (i.e., initial configuration and scheduler).
As a result, instead of constructing an indistinguishability chain for
every random tape, we can construct a single indistinguishability chain
among ensembles. Moreover, instead of explicitly refering  to the termination
probability induced by the tapes' distributions, we can directly use the
termination probability defined on ensembles.

A recent work~\cite{abraham2019communication} extends a lower bound by
Dolev and Reischuk~\cite{dolev1985bounds} on the communication
complexity of Byzantine agreement to the randomized case.
Their proof makes use of arguments about the indistinguishability between two adversaries, without providing a formal definition.  In \Cref{sec:formalize} we show that using our definitions can fill this gap in their presentation. 
The work of Fich, Herlihy and Shavit in \cite{fich1998space}
proves space lower bounds for randomized shared objects.
Their technique exploits covering arguments, which, in turn, rely
on deterministic indistinguishability. 
To this end, they remove the stochastic nature of the problem by
considering the ``\emph{nondeterministic solo termination}''
property that requires only a non-zero probability for solo
termination. 
They also show that their bounds immediately apply
to nondeterministic wait-free objects, but conjecture that the bounds
are not tight for that case. 
A possible reason could be that their
technique ignores the quantitative probabilistic nature of the problem.
A recent paper by Ellen, Gelashvili, and
Zhu~\cite{ellen2018revisionist} shows that nondeterministic solo
termination and obstruction freedom are equivalent in the specific
context of space lower bounds.\\

\noindent\textbf{Indistinguishability and equivalence in probabilistic systems.}
Probabilistic protocols have a long tradition in the computer
science.
There is a broad literature concerned with rigorous
probabilistic analysis of such protocols, both in cryptography
\cite{goldreich2009foundations, Maurer02,goldwasser1989knowledge,yao1986generate,GMW87} and in distributed
systems \cite{SegalaPHD95, aspnes1998lower,Pnueli83,HartSharirPnueli83}.
Moreover, notions of indistinguishability play an important role in the
analysis of probabilistic systems. In cryptography, for example,
notions of computational indistinguishability and statistical
indistinguishability are routinely used in order to capture the fact
that protocols do not unintentionally leak information (e.g., in
zero-knowledge protocols and multi-party computation) and to formalize
notions such as pseudo-randomness. Indistinguishability in this context
is defined in terms of the difference between families of
distributions, and in terms of an agent's ability to tell them apart.
In the context of probabilistic automata there are well established
notions of simulation and bisimulation that provide definitions of
equivalence between two systems \cite{SegalaLynch95, CanettiCKLLynchPSegala06}.
These facilitated  notions of refinement and the verification of probabilistic
systems.

It may be possible to formulate probabilistic arguments used in lower
bound proofs for standard distributed systems protocols either in terms
of the cryptographic notions of indistinguishability or in terms of
bisimulation among I/O automata.
However, this would require nontrivial technical adjustments, and it is
not clear that it would provide new insights or
inroads into the essence of the proofs at hand.
Indeed, as reviewed above, none of the lower bound proofs on probabilistic protocols in distributed computing
that we are aware of make use of these frameworks.

\section{Model}
\label{sec:model}

We consider a standard message passing model with a set $\Pi$ of $n$
parties and an adversary~\cite{attiya2004distributed}.
Parties communicate via an asynchronous network of peer-to-peer communication links.
\footnote{Our definitions can be easily translated to synchronous
	communication and shared memory models.} 
Each party maintains a well-defined \emph{local state} at all
times.
We assume for simplicity that the local state of each party $p_i$ at 
a particular  time $\tp$ consists of an initial state $\ls{i}\!^0$ and the finite
sequence of local events at~$p_i$ up to time~$\tp$.
This sequence is composed of the actions that $p_i$ has performed
(including the messages it has sent) before time $\tp$, as well as the
messages that  $p_i$ received until time  $\tp$.
In particular, its \emph{initial} local state is $\lr{\ls{i}\!^0, []}$.
(For example, in a Consensus algorithm the initial local state of each party
is $\lr{v_i,[]}$, where~$v_i$ is its input value.)
A \emph{configuration} is a mapping from parties to their local states and from communication links to the set of pending messages therein. (A message that has been sent on a link but not yet delivered is \emph{pending}.)
An \emph{initial configuration} associates with each party its initial local state and
each link with an empty set of pending messages. 
An \emph{algorithm} defines the actions that each party performs (local computations, decisions and message sends) as a function of its local state.

We assume an interleaving model where at each point in time a single local event occurs \cite{lynch1996distributed}. 
A local event consists either of a local step performed by a party according to the algorithm, or of the delivery of a pending message.
The identity of the party that moves, or the pending message that is delivered, are determined by the adversary.
Both the scheduling of local steps and the delivery of messages are asynchronous.
I.e., while the adversary must schedule every correct party to move infinitely often, the relative rates by which parties move can be arbitrary.
Moreover, while every message sent must be eventually delivered (exactly once), there is no bound on how long messages spend in transit.
An \emph{execution} of a deterministic algorithm \alg{} is a (finite or infinite) sequence of the form $e=\langle C_0,\phi_1,C_1,\phi_2,C_2,\phi_3,\ldots\rangle $, where $C_0$ is an
initial configuration, $C_k$ is a configuration, and $\phi_k$ is either a local step or a message delivery for every $k>0$.
In case $\phi_k$ is a local step by a correct party, the configuration $C_{k}$ is obtained from $C_{k-1}$ by modifying this party's local state (and possibly an outgoing link) according to the algorithm~\alg.
If $\phi_k$ is the delivery of a message $m$ to party~$p_i$ from party $p_j$, then $C_{k}$ is similar to $C_{k-1}$ except that $m$ is removed from the link between $p_i$ and $p_j$ and appended to~$p_i$'s local state.

We also wish to model settings in which failures can occur.
In this case, the identity of faulty parties and their behavior are determined by the adversary.  
In any given execution, we associate with the adversary a \emph{strategy}, which determines all of its decisions in the execution. 
Various failure assumptions exist in the literature (e.g., crash, Byzantine, authenticated Byzantine, etc.).
Each failure model induces its own set of constraints on how the adversary's strategy affects failures.

In order to facilitate the study and analysis of randomized algorithms, we slightly extend the model by adding probabilistic objects (\Oracle{} for short) to the system.
We add a new type of local action (local step) that consists of a party accessing a probabilistic object. The result of this action is that the party immediately receives a return value from the object. 
A probabilistic object has a local state that can change following an access.
The return value obtained from accessing such an object is sampled from a given distribution, which may depend on the object's local state. The range of return values may be infinite but it must be countable.
A randomized algorithm is associated with the set of \Oracle s that it employs, and the specifications of  these objects are  part of the algorithm's definition.
Moreover, a configuration now contains the local states of the probabilistic objects, in addition to the state of the parties and the communication links.

We can, for example, model randomization via a simple local coin by
having a (stateless) probabilistic object that returns 1 or 0 with
probability 1/2 for each access.
Probabilistic objects can be used to model more complex situations in which
there may be correlations among values received by different parties.
The reason we use probabilistic objects is to facilitate the analysis
of systems with e.g., shared coins, VRFs, etc.\ all within the same
framework~\cite{king2016byzantine, aspnes1998lower, canetti1993fast,
micali1999verifiable, cachin2005random}.

\section{Probabilistic Indistinguishability}
\label{sec:probInd}

In this section, we generalize the notion of indistinguishability to account for probabilistic aspects.
First, let us review a standard definition of indistinguishability.
Recall that an algorithm determines a party's behavior as a function of its local state.
Therefore, two executions $e_1$ 
and $e_2$ 
are indistinguishable to a party $p_i$ if the sequence of local states it goes through in both is the same.
More formally, we define the \emph{$i$-local history} for a party $p_i$ in a given execution $e$ to be the (stuttering-free) sequence of $p_i$'s local states in  the configurations of $e=\langle C_0,\phi_1,C_1,\phi_2,C_2,\phi_3,\ldots\rangle$. I.e., the sequence of local states except that consecutive repetitions are removed.
Executions $e_1$ and $e_2$ are considered  \emph{indistinguishable} to $p_i$ if both executions induce the same $i$-local history.

In the context of randomized algorithms, one is often interested in probabilistic properties of the algorithm, such as
the probability that a given action is taken.
Since a single execution does not, in itself, contain such probabilistic information, 
indistinguishability between pairs of executions is not the appropriate notion for reasoning about such properties.
 Such reasoning requires assigning probabilities to events consisting of appropriately chosen sets of executions. Indeed, a probabilistic notion of  indistinguishablity should be based on relating such sets of executions.

There is typically no way to define a probability space over the set of all executions of a randomized algorithm in an asynchronous system with failures. The initial configuration, the scheduling of parties to move and of message deliveries, and the identities and behavior of faulty parties are considered genuinely nondeterministic decisions. 
No probabilistic distribution is assumed on these decisions. We
typically consider the nondeterministic decisions to be governed by an {\it adversary}. Adversaries come in different types, e.g., static or adaptive, and their abilities may vary, e.g., in terms of the type of failures that they can cause---crash vs.\ Byzantine, etc. The notion of indistinguishability that will be presented shortly applies to all of them; it  is independent of the type of adversary under consideration.
An adversary of a given type can employ many different concrete strategies. 
In all cases, fixing the adversary's strategy eliminates all nondeterminism. Consequently, every  transition is then either deterministic or purely probabilistic, and this induces a well-defined probability space over subsets of the executions. 
In order to define the probability space of interest w.r.t.\ a given strategy of the adversary, we proceed as follows.

\subsection{Ensembles}\label{sec:ensemble}

We define an \emph{ensemble} to be a directed weighted tree in which each node is a configuration, and edges represent local events.
When clear from the context, we may slightly abuse notation by writing $v\in\scn$ to denote that node~$v$ appears in the ensemble \ens.
For a node $v\in\scn$ we define $E_{\emph{out}}(v)$ to be the set of edges connecting $v$ to its children in \ens. 
We define \textit{maximal path} to be a path that cannot be extended, i.e., starts at the root and ends at a leaf. 
For readability, when clear from context, we refer to them simply as paths.
An ensemble \ens{} with respect to an algorithm \alg{} must satisfy the following properties:

\begin{itemize}  
  
  \item The root of the ensemble (tree) is an initial configuration $C_0$.

 \item Each path in the ensemble consists of the sequence of configurations of a legal execution of \alg. 

 \item For every node $v\in\ens$, the weights on the edges of $E_{\emph{out}}(v)$ are positive and their sum is~1. Moreover,
  
  \begin{itemize}
    
    \item If $|E_{\emph{out}}(v)| = 1$, then the edge $(v,u) \in
    E_{\emph{out}}(v)$ represents a deterministic local event.
         
    \item Otherwise, (when $|E_{\emph{out}}(v)| > 1$), the edges in $E_{\emph{out}}(v)$ represent a single local event consisting of accessing a probabilistic object. 
    Each edge $(v,u) \in E_{\emph{out}}(v)$ represents a possible return value, where the weight on an edge is the probability that the object access returns this value. 
  \end{itemize}  

\item The return values and the weights of edges that represent accesses to a probabilistic object~\Oracle{} in \ens{} satisfy the object's specifications.
For example, consider an object \Oracle{} consisting of a biased coin returning $1$ with probability $x\in[0,1]$ and returning $0$ with probability $1-x$. If a party accesses \Oracle{} at a configuration $v\in\ens$, then the node~$v$ has two children. 
One child, at the end of an edge labeled $x$ corresponds to the coin returning~$1$. The other child corresponds to the coin returning~$0$, at the end of an edge labeled~$1-x$.

\end{itemize} 

We note that similarly to how, in deterministic algorithms, an adversary's strategy determines an execution of \alg, in randomized algorithms the adversary's strategy determines an ensemble of \alg.
Ensembles generalize executions in the sense that a
deterministic algorithm yields ensembles that consist of a single path.

\paragraph*{Associating probabilities with configurations}
An ensemble \ens{} induces a probability space defined by the triplet $(\Omega_{\scn}, {\mathcal F}, P_{\scn})$ which is specified as follows.
$\Omega_{\scn}$ is the set of paths (executions) in \scn.
For each node~$v$ in the ensemble, define by $S_v$ the set of executions
(paths) in $\scn$ that pass through~$v$.
${\mathcal F}$ is the sigma algebra generated by $\{ S_v: v\in \scn \} $. (Closing under complement and countable unions.)
Finally, $P_{\scn}$ is the probability function defined by $P_{\scn}(S_v)\triangleq$ the product of the edge weights along the path from the root to~$v$.
As required, our definitions satisfy both $P_{\scn}(\bar{S})=1-P_{\scn}(S)$ and that $P_{\scn}$ is countably additive.

\subsection{Local Ensembles}\label{sec:local exp}
Our goal will be to define a notion of indistinguishability among ensembles, with respect to a particular party~$p_i$. 
Roughly speaking, this is determined by~$p_i$'s local histories in these ensembles. To this end, we consider a probabilistic tree consisting of the $i$-local histories in a given ensemble~$\scn$. This is called an \emph{$i$-local ensemble},  and is denoted by~$\scn_i$.  Since~$p_i$ can have the same local history in different paths of~$\ens$, a node in $\ens_i$ typically corresponds to several nodes of~$\ens$. Consequently, the construction of~$\ens_i$ must be done carefully, to correctly account for the probabilistic transitions in~$\ens_i$. 

For each unique local state $\ls{i}$  of~$p_i$ that appears in
$\scn$, there is a node $v_\ls{i}\in\scn_i$ labeled with the tuple
$\lr{\ls{i},p_\ls{i}}$, where $p_\ls{i}\in [0,1]$ is the probability of
$p_i$ to reach $\ls{i}$ in ensemble \scn.
More formally, for a local state $\ls{i}$, define 
$S_{\ls{i}}=\{S_v\mid
v\in\scn{} \text{ and $v$ is a configuration containing $\ls{i}$} \}$.
The resulting probability assigned to $v_{\ls{i}}\in\scn_i$ labeled with $\ls{i}$ is $p_\ls{i} = P_{\scn}(\bigcup\limits_{S_v\in S_{\ls{i}}} S_v)$.
The root of $\scn_i$ is labeled with $\ls{i}\!^0$, i.e., $p_i$'s local state at the root of~$\scn$, and~$p_{\ls{i}\!^0}=1$.
For each node $v_\ls{i}\in\scn_i$ labeled with an $i$-local state
$\ls{i}$, we define the set $\Child(v_\ls{i})$ to consist of the nodes $u_{\hat{\ls{i}}} \in\scn_i$ labeled by local states
$\hat{\ls{i}}\ne\ls{i}$ that directly follow
$\ls{i}$ in an execution contained in \scn.
I.e., there are $v,u\in\scn$ such that $u$ is a child of~$v$, and
$v,u$ represent configurations that contain $\ls{i}$ and
$\hat{\ls{i}}$, respectively.
(Notice that since local states contain
the full history of local events, once $p_i$ has transitioned from a
local state $\ls{i}$, this state never repeats.)
An illustrative example of an ensemble and its induced local ensemble for some party $p_i$ is given in \Cref{fig:ens and i-local exp}.
We remark that constructing an $i$-local ensemble $\scn_i$ is linear in the size of
\scn. An algorithm is provided in \Cref{sec:pseu...} for completeness.

\begin{figure*}[th]
	\centering
	\begin{subfigure}[t]{0.48\textwidth}
		\centering
		\includegraphics[width=\textwidth]{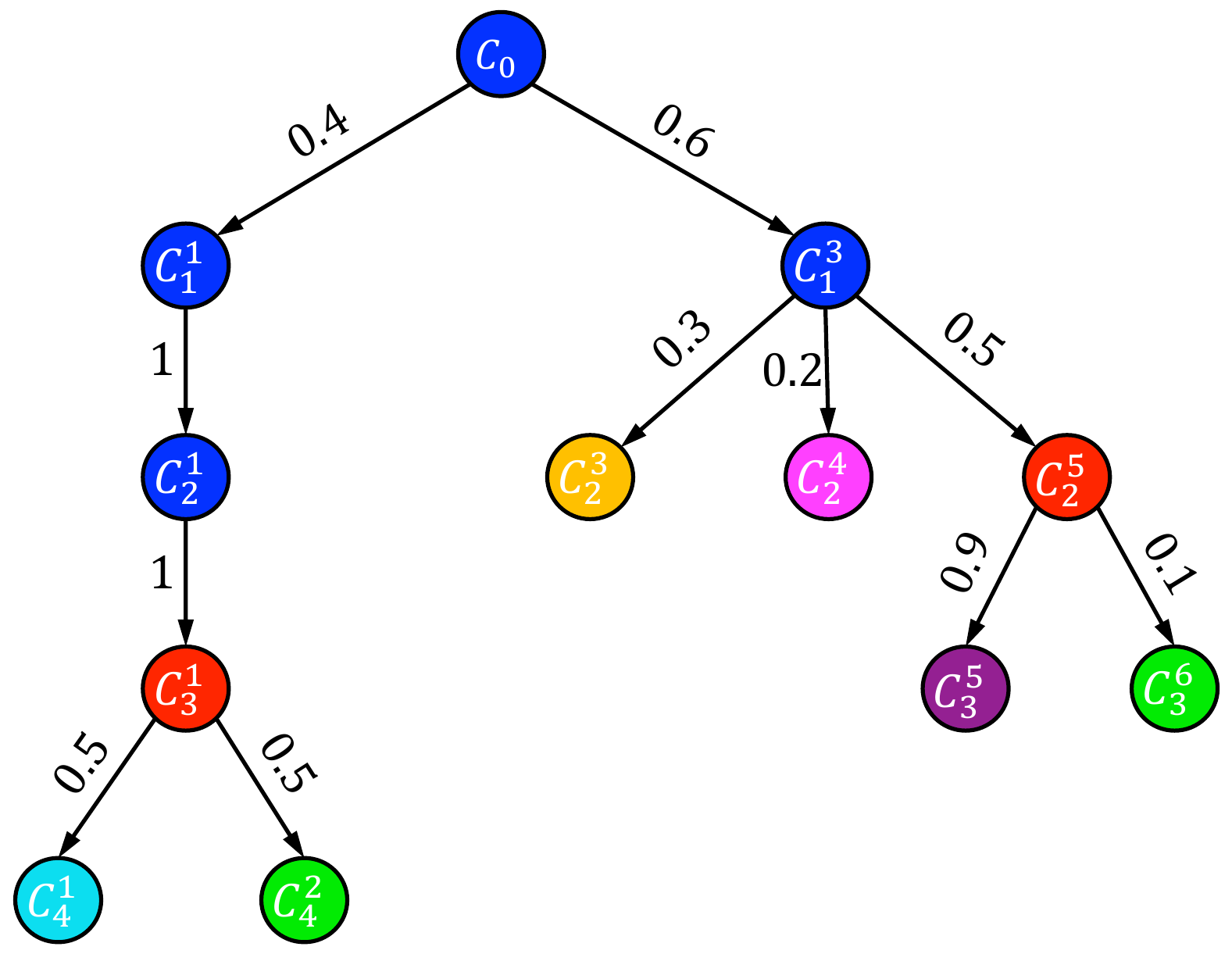}
		\caption{An ensemble \ens.}
	\end{subfigure}\hfill
	\begin{subfigure}[t]{0.48\textwidth}
		\centering
		\includegraphics[width=\textwidth]{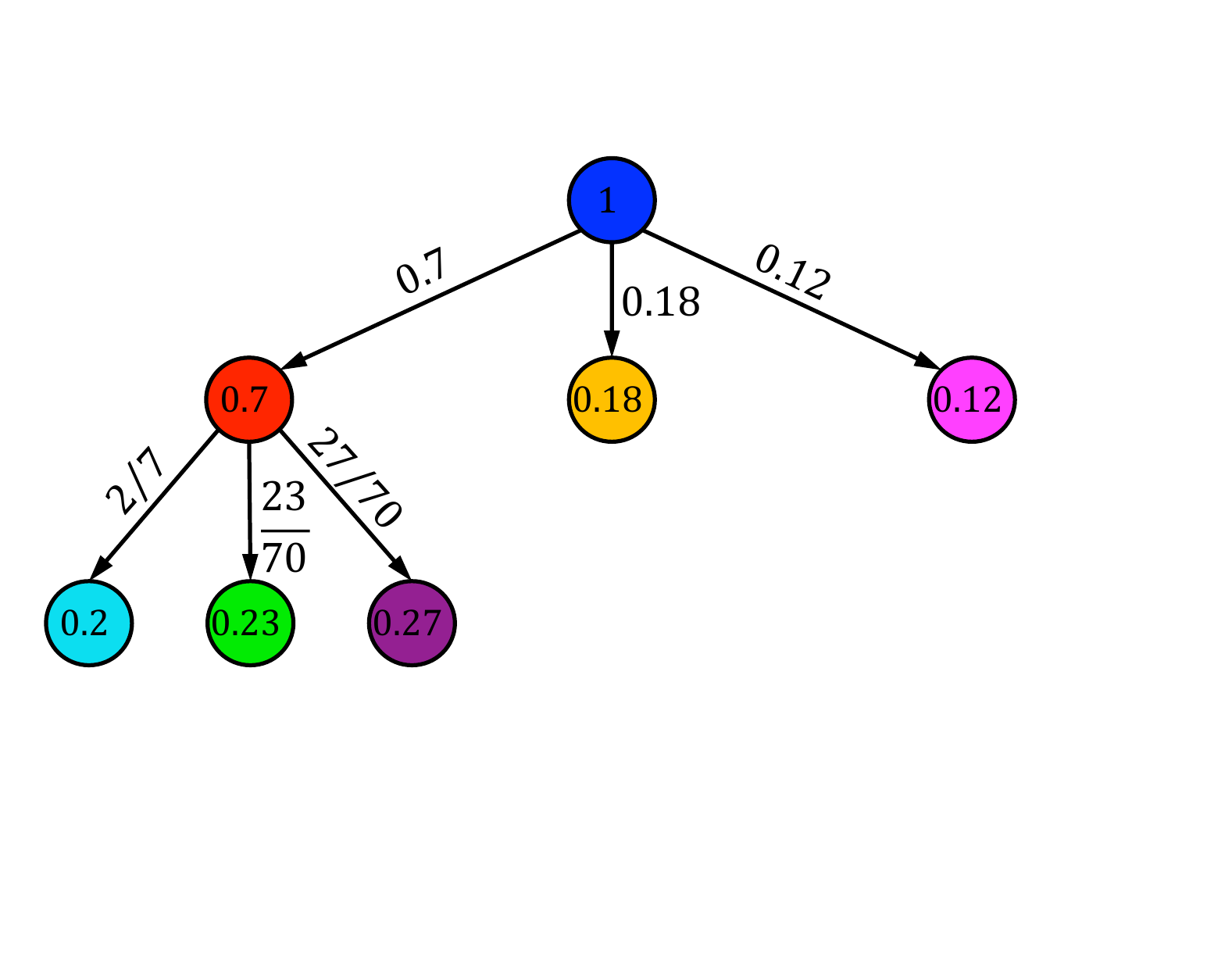}
		\caption{The $i$-local ensemble \ens$_i$ in~$\ens$.}
	\end{subfigure}
	\caption{An ensemble \ens{} and its induced $i$-local ensemble for some party~$p_i$. 
		The ensemble corresponds to a given adversary strategy. It 
		contains $6$ possible executions, all start in in configuration~$C_0$.
		Each node of \ens{} represents a unique configuration and the local state of party $p_i$ in this configuration is color encoded in our figure. For example, 
		$p_i$ has the same local state, which encode in red,  in $C_3^1$ and $C_2^5$.
		 The edges of \ens{} are labeled with their probabilities.
		 %
		 The color of a node $v_\ls{i}$ in the $i$-local ensemble $\ens_i$ represents its local state~$\ls{i}$ and the node is labeled with the probability $p_\ls{i}$ of reaching this local state under the current strategy. For convenience, edges of $\ens_i$ are weighted in a similar fashion to the edges of \ens.
	 }
	\label{fig:ens and i-local exp}
\end{figure*}

 \begin{definition}[\textbf{\textit{Probabilistic Indistinguishability}}]
 \label{def:PI}
 	Two ensembles are indistinguishable to a party~$p_i$ if they induce the same $i$-local ensemble.
 \end{definition}

For a given ensemble \scn{} and a local action~$\act$ of party~$p_i$, we
can assign a probability for~$p_i$ to perform~$\act$ in \scn{} according to the probability space induced by \scn.
A central feature of probabilistic indistinguishability is captured as follows:

\begin{lemma}
\label{cor:action}
	Let \scn-A and \scn-B be two ensembles of algorithm \alg{} that are probabilistically indistiguishable to party~$p_i$.
	For each action~$\act$ of process $p_i$, the probability that~$p_i$ performs~$\act$ is equal in~\scn-A and \scn-B.
\end{lemma}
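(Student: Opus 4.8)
The plan is to show that the probability of performing an action $\act$ depends only on the $i$-local ensemble, so that two ensembles inducing the same $i$-local ensemble must assign $\act$ the same probability. The key observation is that, by definition, an algorithm determines a party's behavior as a function of its local state. Hence whether $p_i$ performs action $\act$, and with what probability, is fully determined by $p_i$'s local state $\ls{i}$ together with the outgoing probabilistic transition (if any) that the algorithm prescribes at $\ls{i}$. The information contained in the $i$-local ensemble $\scn_i$ is precisely a node for each reachable local state $\ls{i}$, tagged with the reaching probability $p_\ls{i}$, together with the child structure encoding the local-state transitions and their edge weights.

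First I would make precise what ``$p_i$ performs action $\act$'' means as an event in the probability space $(\Omega_\scn, \mathcal{F}, P_\scn)$. An action $\act$ is taken at some local state $\ls{i}$; the relevant event is the (countable) union of path-sets $S_v$ over configurations $v\in\scn$ at which $p_i$, sitting in local state $\ls{i}$, executes $\act$. I would then express $P_\scn(\act)$ as a sum over the local states $\ls{i}$ at which $\act$ can be taken: for each such $\ls{i}$, the contribution is $p_\ls{i}$ (the probability of reaching $\ls{i}$) multiplied by the conditional probability $q_{\ls{i},\act}$ that, having reached $\ls{i}$, the next local transition is the one realizing $\act$. The quantity $q_{\ls{i},\act}$ is exactly the weight on the corresponding edge of $E_{\emph{out}}$ in the local ensemble, or $1$ in the deterministic case, and both $p_\ls{i}$ and $q_{\ls{i},\act}$ are recorded in $\scn_i$.

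Next I would invoke the hypothesis that $\scn\text{-}A$ and $\scn\text{-}B$ induce the same $i$-local ensemble. This means the two trees have the same nodes (same set of reachable local states $\ls{i}$), the same labels $p_\ls{i}$ on each node, and the same child structure with the same edge weights. Therefore each term $p_\ls{i}\cdot q_{\ls{i},\act}$ in the sum expressing $P_\scn(\act)$ is identical across the two ensembles, and so the totals agree. A clean way to package this is to argue directly that $\scn_i$ carries all the data needed to compute $P_\scn(\act)$, so that $P_\scn(\act)$ is a function of $\scn_i$ alone; identical local ensembles then trivially give identical action probabilities.

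The main obstacle I anticipate is the bookkeeping around the map from nodes of $\scn$ to nodes of $\scn_i$, which is many-to-one: the same local state $\ls{i}$ can appear in several configurations along different paths (as in $C_3^1$ and $C_2^5$ of the figure). I must be careful that $p_\ls{i}$ is defined via $P_\scn\bigl(\bigcup_{S_v\in S_{\ls{i}}} S_v\bigr)$ over a genuine disjoint-or-overlapping family, and that conditioning on reaching $\ls{i}$ behaves correctly, i.e.\ that the conditional next-transition distribution is well defined and matches the edge weights of $\scn_i$. Here the parenthetical remark in the construction---that because local states record the full history of local events, a local state never recurs along a path---is essential: it guarantees that the sets $S_{\ls{i}}$ partition appropriately and that each path passes through $\ls{i}$ at most once, so the decomposition of $P_\scn(\act)$ into $\sum_{\ls{i}} p_\ls{i}\, q_{\ls{i},\act}$ is unambiguous. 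Once this is pinned down, the equality of action probabilities follows immediately from the equality of the local ensembles.
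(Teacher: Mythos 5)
Your overall strategy---showing that the probability of performing $\act$ is a function of the $i$-local ensemble alone, by decomposing it over $p_i$'s local states---is the same as the paper's, and most of the pieces are in place. But there is a genuine gap in the step where you declare the decomposition $P_{\scn}(\act)=\sum_{\ls{i}} p_{\ls{i}}\, q_{\ls{i},\act}$ ``unambiguous.'' You justify it by the fact that a local state never recurs along a path, but that is the wrong worry. The real danger is that the \emph{action} $\act$ may be performed at two \emph{different} local states along the same path: nothing in the model forbids $p_i$ from performing the same action twice (e.g., re-sending the same message, or accessing a probabilistic object with the same outcome). In that case the events ``having reached $\ls{i}$, the next transition realizes $\act$'' for distinct $\ls{i}$ are not disjoint as subsets of $\Omega_{\scn}$: a single path can contribute to two terms of your sum, so the sum equals the \emph{expected number} of performances of $\act$, not the probability that $\act$ is performed, and your claimed identity fails. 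The paper's proof is organized precisely around this point: it counts only \emph{first} performances, by restricting attention to nodes whose local state contains $\act$ exactly once and as the last element of the event sequence, and it proves disjointness---two distinct such local states cannot lie on one path, because local states along a path extend one another, so the longer one would have to contain $\act$ twice. Your argument needs the same restriction: sum only over local states $\ls{i}$ in which $\act$ does not yet occur, so that each term captures the first performance of $\act$; disjointness then holds and the sum is indeed the probability of performing $\act$ at least once.

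A second, more minor, issue: the formal definition of $\scn_i$ records only node labels $\lr{\ls{i},p_{\ls{i}}}$ and the child relation; edge weights on $\scn_i$ appear only in the figure, ``for convenience.'' So your conditional probability $q_{\ls{i},\act}$ is not primitive data of the local ensemble and must be derived, e.g.\ as $q_{\ls{i},\act}=p_{\hat{\ls{i}}}/p_{\ls{i}}$, where $\hat{\ls{i}}$ is $\ls{i}$ with $\act$ appended (here non-recurrence of local states is what you actually need, to ensure every path reaching $\hat{\ls{i}}$ first passes through $\ls{i}$). Once you do this the conditioning cancels, $p_{\ls{i}}\,q_{\ls{i},\act}=p_{\hat{\ls{i}}}$, and your formula collapses to the paper's: the probability of performing $\act$ is the sum of $p_{\hat{\ls{i}}}$ over the nodes of the local ensemble whose state ends with the first occurrence of $\act$, a quantity manifestly equal in \scn-A$_i$ and \scn-B$_i$.
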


\begin{proof}
	Let \scn-A and \scn-B satisfy the assumptions, and let $\act$ be an action of process~$p_i$.
	Recall that the local state of process~$p_i$ contains the sequence of all actions that $p_i$ has performed up to its current state.
	Denote by $V_{\act}^{\text{A}}$ the set of nodes in \scn-A that
	represent configurations in which $\act$ appears only once in the
	sequence of local events contained in $p_i$'s local state, and it is the
	last element in that sequence.
	As before, for a node~$v\in\scn\text{-A}$ define by
	$S_v^{\text{A}}$ the set of executions in \scn-A that pass through~$v$.
	Recall that the set $S_v^{\text{A}}$ is an event in the probability space induced by \scn-A. 
	Thus,  $S_{\act}^{\text{A}}\triangleq\bigcup\limits_{v\in V_{\act}^{\text{A}}} S_v^{\text{A}}$ constitutes a measurable event in the probability space, and $P_{\text{A}}(S_{\act}^{\text{A}})$ corresponds to the probability that process~$p_i$ performs action~$\act$ in the ensemble \scn-A.
	Similarly, define $V_{\act}^{\text{B}}$, $S_{\act}^{\text{B}}$, and $P_{\text{B}}(S_{\act}^{\text{B}})$ with respect to \scn-B.
	It remains to show that $P_{\text{A}}(S_{\act}^{\text{A}}) = P_{\text{B}}(S_{\act}^{\text{B}})$.
	Recall that for a local state $\ls{i}$, we have defined the set of path sets $S_{\ls{i}}^{\scn}=\{S_v^{\scn}\mid v\in\scn{} \text{ s.t. $v$ represents a configuration that contains $\ls{i}$} \}$.
	
	Let $LV_{\act}^{\text{A}_i}$ be the set of nodes in the $i$-local ensemble  \scn-A$_i$ that are labeled with a local state~$\ls{i}$ in which $\act$ appears only once and is the last element in the sequence of local events.
	Similarly, define $LV_{\act}^{\text{B}_i}$ with respect to \scn-B$_i$.
	
	\begin{equation*}
		\begin{split}
			& P_{\text{A}}(S_{\act}^{\text{A}}) = 
			P_{\text{A}} \left(
			\bigcup\limits_{v\in V_{\act}^{\text{A}}} S_v^{\text{A}}
			\right) = 
			P_{\text{A}}\left( \bigcup\limits_{\lr{\ls{i},p}\in LV_{\act}^{\text{A}_i}} 
			\left(\bigcup\limits_{S_v^{\text{A}}\in S_{\ls{i}}^{\text{A}}} S_v^{\text{A}}\right)
			\right)
			= \sum\limits_{\lr{\ls{i},p_{\ls{i}}}\in LV_{\act}^{\text{A}_i}} P_{\text{A}}(\bigcup\limits_{S_v^{\text{A}}\in S_{\ls{i}}^{\text{A}}} S_v^{\text{A}}).
		\end{split}
	\end{equation*}
Where the last equality follows from the fact that for any $\lr{\ls{i}\!^1\!, p_{\ls{i}\!^1}} \ne \lr{\ls{i}\!^2\!, p_{\ls{i}\!^2}} \in LV_{\act}^{\text{A}_i}$ we have that $S_{\ls{i}\!^1}^{\text{A}}\bigcap S_{\ls{i}\!^2}^{\text{A}} = \emptyset$.
This is because $\act$ appears only once in the respective $\ls{i}$, so a path cannot contain two different local states in which $\act$ is both last and appears only once.
By definition of $\scn_i$, we have that $p_\ls{i} = P_{\scn}(\bigcup\limits_{S_v^\scn\in S_{\ls{i}}^\scn} S_v^\scn)$. Therefore,
\begin{equation*}
	\begin{split}
	\sum\limits_{\lr{\ls{i},p_{\ls{i}}}\in LV_{\act}^{\text{A}_i}} P_{\text{A}}(\bigcup\limits_{S_v^{\text{A}}\in S_{\ls{i}}^{\text{A}}} S_v^{\text{A}}) = 
	\sum\limits_{\lr{\ls{i},p_{\ls{i}}}\in LV_{\act}^{\text{A}_i}} p_{\ls{i}} .
	\end{split}
\end{equation*}

Similarly,
\begin{equation*}
	\begin{split}
		P_{\text{B}}(S_{\act}^{\text{B}}) =  
		\sum\limits_{\lr{\ls{i},p_{\ls{i}}}\in LV_{\act}^{\text{B}_i}} p_{\ls{i}} .
	\end{split}
\end{equation*}
	By definition of ensemble indistinguishability,  \scn-A$_i$= \scn-B$_i$, and so  $LV_{\act}^{\text{A}_i} = LV_{\act}^{\text{B}_i}$.
Hence, 
		$P_{\text{A}}(S_{\act}^{\text{A}}) =  P_{\text{B}}(S_{\act}^{\text{B}})$, as claimed.
\end{proof}

In the next section we  use ensembles and probabilistic indistinguishability to prove a lower bound on quality of decisions in Byzantine Agreement.
That is, on the probability of deciding on a bogus value suggested by dishonest parties.

\section{Byzantine Agreement with Qualitative Validity}
\label{sec:Validity}

%

Byzantine Agreement is one of the most fundamental problems in
distributed computing.
A set of $n$ parties, some of which might be Byzantine, 
need  to agree on the same value.
Ideally, we would like the decision to be on a value proposed by an
honest party.
And indeed, in the classic binary case~\cite{ben1983another}, where the
set of possible inputs is $ \{0, 1\}$, this is exactly what the
Validity property of Byzantine Agreement requires.
However, in the multi-valued case, in which inputs come from some
arbitrary domain $\mathbb{V}$, this is generally impossible to guarantee, because
one or more Byzantine parties can propose a value that is not proposed by honest
parties and otherwise act honestly~\cite{N94}.
Since multi-valued Byzantine Agreement protocols are the core of many
Blockchain systems~\cite{cachin2016architecture, Concord}, the issue of
preventing malicious attacks on the ``quality'' of decisions is
becoming more and more important.
The question is, therefore, what is the best validity property a
multi-valued Byzantine Agreement protocol can provide. 
That is, what are the conditions under which an algorithm can be guaranteed to decide
on a value proposed by an honest party and
what is the probability with which such a decision can be ensured
if these conditions fail to hold.

Two incomparable validity definitions, called weak Validity and external
	Validity, have been proposed for the
multi-valued case.
As in the binary case, \emph{weak Validity}~\cite{PSL80, correia2006consensus}
requires that if all honest parties propose the same value $v$, then
$v$ is the only value that can be decided.
However, if honest parties propose different values, then they can
decide on some pre-defined default value (which we denote by~$\bot$).
The initial motivation for weak validity was a spaceship cockpit with
four sensors, one of which might be broken~\cite{PSL80}.
However, from a contemporary practical point of view, such a definition is
useless for building Byzantine state machine replication (SMR) (e.g., as in
blockchains)~\cite{VABA, malkhi2019concurrency} since a decision of~$\bot$ 
in such a setting does not allow the system to make progress. Hence, in
order to guarantee progress, all honest parties must input the same
value (agree \textit{a priori}) even in failure-free runs.

To deal with this issue, Cachin, Kursawe, Petzold and
Shoup~\cite{CachinSecure} introduced the \emph{external Validity} property,
which allows the decision to be any value as long as it is valid
according to some external predicate (e.g., a valid transaction in a
blockchain system).
In particular, external validity does not preclude a situation in
which the decision value does not originate from an honest party.
To overcome this deficiency, Abraham, Malkhi, and
Spiegelman~\cite{VABA} extended the definition of external validity
with a \emph{decision quality} requirement, which bounds the
probability of the decision being a value proposed by the Byzantine
parties.
Specifically, they provide an algorithm that guarantees probability of
at least $1/2$ for the decision value to be an input of an honest
party.
Moreover, they claim in the paper that no algorithm provide a
stronger quality guarantee in the worst case scenario, but provide no
proof.
While their claim is very intuitive, it is not obvious how to prove it
without a notion such as probabilistic indistinguishability.

Note that the two variants of multi-valued validity are incomparable. On
the one hand, with external Validity parties never agree on $\bot$ and 
thus SMR progress is guaranteed by reaching a meaningful decision in
every slot.
On the other hand, honest parties may agree on a bogus value proposed by
malicious participants even if they agree \emph{a priori}. 
In addition, note that neither definition takes into account the
actual number of failures in the execution $f \leq t$. 
Below we define the \optimal{C} Validity property, which
promises progress and is stronger than each of these  validity conditions.

\subsection{Problem definition}
\label{sub:defintion}

In this section we assume a computationally bounded adversary that can corrupt up to $\threshold$ of the~$n$ parties, where $n=3t+1$.
Parties corrupted by the adversary are called
\emph{Byzantine} and may arbitrarily deviate from the protocol. 
Other parties are \emph{honest}.

Given an ensemble, we denote by $\newf \leq t$ the maximal number of parties the adversary corrupts
in any of the paths in the ensemble.
In addition, every party $p_i$ starts with an initial input value
$v_i$ from some domain $\mathbb{V}$, i.e., $p_i$'s local state in the
initial configuration (the root of the ensemble) is $s_i = v_i$.
We denote by $\Vin = \multiset{v_i \mid p_i \in \Pi}$ the multiset of all
input values.
For every multiset $\mathcal{M}$ and value $v \in \mathcal{M}$, we
denote by $\mult{v}{\mathcal{M}}$ the multiplicity of $v$ in $\mathcal{M}$.
The maximum multiplicity in a multiset $\mathcal{M}$ is denoted by
$\maxmult{\mathcal{M}} \triangleq max(\{\mult{v}{\mathcal{M}} \mid v
\in \mathcal{M}\})$.

We distinguish between static and adaptive adversaries and between weak
and strong ones.
A weak adversary does not observe the local states of  honest parties, whereas a
strong one does. 
A static adversary knows the input values but must determine the
corrupted parties at the start, i.e., immediately after the root.
An adaptive one is allowed to corrupt parties on the fly. 
To strengthen our result, we consider a weak and static adversary for
the lower bound, and a strong and adaptive one for the upper bound (the
algorithm).

The Agreement problem exposes an API by which a party can
\emph{propose} the input value and output a \emph{decision} from the
domain $\mathbb{V}$.
An Agreement algorithm is one that satisfies the Agreement, Termination
and Validity properties.
As deterministic solutions in failure-prone asynchronous systems are
impossible by FLP~\cite{FLP85}, we are interested in algorithms that
never compromise safety, and ensure liveness
almost surely. 
That is, we require that every ensemble~\scn{} of the algorithm
must satisfy the following properties:

\begin{itemize}
  
  \item Agreement: In every path (i.e., execution) of \scn{}, all
  honest parties that decide, output the same decision value.
  
  \item Probabilistic Termination: Every honest party decides with
  probability 1.
  
\end{itemize}

\noindent As for validity, we extend previous
definitions~\cite{PSL80,correia2006consensus,CachinSecure}
to capture the optimal conditions under which parties decide on a value
proposed by an honest party. 
Recall that input values are determined by the initial configuration,
which is at the root of an ensemble.
It follows that all executions in an ensemble share the same input vector.
The ensemble notation allows us to require
the following non-deterministic property:

\begin{itemize}   
 \item  \optimal{C} Validity: If $\maxmult{\Vin} -\newf \geq 2t+1$,
 then all honest parties that decide, output decision values in $\Vin$.
 Otherwise, the probability that they decide on a value
 in $\Vin$ is at least $1 - \frac{\newf}{n-t}$.
\end{itemize}

One important feature of \optimal{C} Validity is that in ensembles
without failures honest parties can only decide on a value in $\Vin$.
Moreover, parties never decide on a pre-defined $\bot$ and the
probability to decide on a value in $\Vin$ is proportional to the
number of Byzantine parties that actually occur in
the ensemble.
Many Validity definitions are phrased with relation to the input of correct processes. We, in contrast, consider all inputs and give the adversary a choice to corrupt the parties based on their input. Our phrasing does not weaken the Validity property (a direct proof appears in \Cref{???}).

\subsection{Tight bounds on Qualitative Validity}
\label{sec:lower}

In this section we prove that no algorithm for 
multi-valued Byzantine Agreement that satisfies Agreement and 
Probabilistic Termination can provide a better guarantee than
\optimal{C} Validity.
Moreover, we then show that this validity condition is the best we can achieve, 
by presenting an algorithm that satisfies this validity property. 
%

The following lemma states that if the condition in the \optimal{C}
Validity definition ($\maxmult{\Vin} -\newf \geq 2t+1$) does not hold,
then we can always find a multiset $\mathcal{M} \subset \Vin$ of size $n - t
- \newf$ such that no value in $\mathcal{M}$
has multiplicity higher than~$t$.

\begin{lemma}
\label{lem:subset}

Consider a multiset $\Vin$ of $n = 3t+1$ values and $0 < \newf \leq t$.\\
If $\maxmult{\Vin} -\newf < 2t+1$, then there is a multiset $\mathcal{M}
\subset \Vin$ such that $|\mathcal{M}| = n-t-\newf$ and
$\maxmult{\mathcal{M}} \leq t$.
\end{lemma}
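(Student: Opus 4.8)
The plan is to build $\mathcal{M}$ greedily by a capping argument: pass through the distinct values of $\Vin$ and, for each, include at most $t$ of its copies, stopping once $\mathcal{M}$ reaches the target size $n-t-\newf = 2t+1-\newf$. Such a greedy choice automatically guarantees $\maxmult{\mathcal{M}} \leq t$, so the only thing to verify is that enough elements are available, i.e., that the total ``capped capacity'' $\sum_{v} \min(\mult{v}{\Vin}, t)$ is at least $2t+1-\newf$. Equivalently, writing this capacity as $n$ minus the total excess above the cap, it suffices to prove that $\sum_v \max(\mult{v}{\Vin}-t,\,0) \leq t+\newf$.

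To bound the excess, I would first observe that at most two distinct values can have multiplicity exceeding $t$: three such values would account for at least $3(t+1) = 3t+3 > n$ elements, a contradiction. I then split into two cases according to the number of ``heavy'' values (those with multiplicity $>t$). If there is a single heavy value $v_1$, the hypothesis $\maxmult{\Vin}-\newf < 2t+1$ gives $\mult{v_1}{\Vin} = \maxmult{\Vin} \leq 2t+\newf$ (the quantities being integers), so the excess is $\mult{v_1}{\Vin}-t \leq t+\newf$. If there are two heavy values $v_1,v_2$, then $\mult{v_1}{\Vin}+\mult{v_2}{\Vin} \leq n = 3t+1$, so the excess is $(\mult{v_1}{\Vin}-t)+(\mult{v_2}{\Vin}-t) \leq t+1 \leq t+\newf$, where the last inequality uses $\newf \geq 1$ (which holds since $0 < \newf$ and $\newf$ is integral). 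Either way the excess is at most $t+\newf$, as required.

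Combining the cases yields $\sum_v \min(\mult{v}{\Vin}, t) = n - \sum_v \max(\mult{v}{\Vin}-t,0) \geq (3t+1)-(t+\newf) = 2t+1-\newf$, so the greedy procedure reaches the target size, producing $\mathcal{M}$ with $|\mathcal{M}| = 2t+1-\newf$ and $\maxmult{\mathcal{M}} \leq t$. Since $\newf \geq 1$ we have $|\mathcal{M}| = 2t+1-\newf \leq 2t < n$, so $\mathcal{M} \subset \Vin$ is a strict sub-multiset, as claimed. I expect the only delicate point to be the two-heavy-value case, where the bound is tight and relies essentially on $\newf \geq 1$; the rest is a routine counting and greedy-selection argument.
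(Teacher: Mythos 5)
Your proof is correct, but it takes a genuinely different route from the paper's. The paper splits on whether $\maxmult{\Vin} \leq t$: in that case any sub-multiset of size $n-t-\newf$ works; otherwise it caps only the single most frequent value $v$ at $t$ copies, uses the hypothesis to get $\mult{v}{\Vin} \leq 2t+\newf$ (hence at least $t-\newf+1$ elements distinct from $v$ remain), and fills $\mathcal{M}$ with any $t-\newf+1$ such elements --- their multiplicities in $\mathcal{M}$ are automatically at most $t-\newf+1 \leq t$ simply because there are so few of them, which is exactly where $\newf > 0$ enters. You instead cap every value at $t$ and run a capacity-versus-excess count, bounding $\sum_v \max(\mult{v}{\Vin}-t,0) \leq t+\newf$ by a case split on the number of heavy values (at most two, since three would exceed $n = 3t+1$; the zero-heavy case is trivially covered by excess $0$). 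Your bookkeeping is slightly longer but more informative: it isolates that the hypothesis $\maxmult{\Vin}-\newf < 2t+1$ is needed only in the one-heavy case, while the two-heavy case follows from $n=3t+1$ and $\newf \geq 1$ alone, and the $\min/\max$ capacity framework would adapt more readily to other parameter regimes. The paper's construction buys brevity: one cap, one count, and the size of the leftover block does double duty as the bound on its multiplicities.
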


\begin{proof}

%
%
%
%

Consider two cases:
\begin{itemize}
  
  \item  If $max\_mult(\mathcal{V}_{in}) \leq t$ then the claim's
  conclusion is true and the lemma holds.
  
  \item Otherwise, let $v \in \mathcal{V}_{in}$ such that $mult(v,
  \mathcal{V}_{in}) = max\_mult(\mathcal{V}_{in}) > t$.
Since, by assumption, $max\_mult(\mathcal{V}_{in}) - \newf < 2t+1$, we obtain $mult(v, \mathcal{V}_{in}) \leq 2t + \newf$.
This means that there are at least $n - (2t + \newf) = t - \newf + 1$
values distinct from $v$ in $\mathcal{V}_{in}$.
Define $\mathcal{M}$ to contain~$t$ copies of~$v$ along with $t -
\newf + 1$ values distinct from $v$. Then  
 $|\mathcal{M}| = 2t -\newf +1 = n - t-\newf$, and 
 $\maxmult{\mathcal{M}} \leq t$ as desired, since $\newf > 0$.
\end{itemize}
\vspace{-3mm}

\end{proof}

Clearly, if $\maxmult{\Vin} -\newf \geq 2t+1$,
then   \optimal{C} Validity guarantees that the decision value is
in~$\Vin$, which is the most that a validity property can ensure.
Therefore, for a validity property~$\Phi$ to be stronger
than \optimal{C} Validity, there must be $\fz$ and $\Vin$ for which
the probability to decide on a value in $\Vin$ according to~$\Phi$  is
strictly higher than $1 - \frac{\fz}{n-t}$.
However, for a validity property to be strictly stronger
than \optimal{C} Validity it must, in addition, satisfy \optimal{C}
Validity for all other values of $\newf \leq t$ and all other $\Vin$.
Formally, we say that a Validity property~$\Phi$
 is {\it strictly stronger} than \optimal{C} Validity if an algorithm
 \alg{} satisfying~$\Phi$ guarantees that:
\begin{enumerate}
\item For all $\newf\le t$ and $\Vin$ such that $\maxmult{\Vin} -\newf \ge
2t+1$, in every ensemble of \alg{} with $\newf$ corrupted parties
and the input multiset $\Vin$ an honest party that decides, decides on
a value in $\Vin$.
\item For all $\newf\le t$ and $\Vin$ such that $\maxmult{\Vin} -\newf < 2t+1$,
in every ensemble of \alg{} with $\newf$ corrupted parties and the
input multiset $\Vin$ the probability to decide on a value in $\Vin$ is at least $1 - \frac{\newf}{n-t}$.

\item There exist some $\fz\le t$ and $\Vin$ that satisfies $\maxmult{\Vin} -\fz
< 2t+1$, such that in every ensemble of \alg{} with $\fz$ corrupted
parties and the input multiset $\Vin$ the probability to decide on a value in
$\Vin$ is
strictly higher than $1 - \frac{\fz}{n-t}$.
\end{enumerate}

Intuitively, if $\maxmult{\Vin} -\newf < 2t+1$, then the adversary can
delay $t$ honest parties until the decision is made such that the
remaining honest parties have input values in a multiset
$\mathcal{M}$ that satisfies the property in Lemma~\ref{lem:subset}.
Therefore, we say that a ``fair share'' probability for a value $v \in
\mathcal{M}$ to be decided is proportional to $\mult{v}{\mathcal{M}}$
and equal to $\frac{\mult{v}{\mathcal{M}}}{n-t}$.
Roughly speaking, by (3), for a validity condition to be strictly
stronger than \optimal{C} Validity, there must be a value in $v \in \mathcal{M}$ that
is decided with a probability that is higher than its ``fair share''.  
We show that in this case there is a probabilistically
indistinguishable ensemble in which $v$ is proposed only by corrupted
parties and the probability to decide on $v$ is the same.
As a result, in that ensemble the corrupted parties get more than their ``fair share'',
and thus violate (2).
Formally, to show that \optimal{C} Validity is optimal
we prove the following:

\begin{theorem}
	\label{thm:lower}
No asynchronous Byzantine Agreement algorithm  satisfies a
validity property~$\Phi$ that is strictly stronger than \optimal{C} Validity even against a weak and static adversary. 
\end{theorem}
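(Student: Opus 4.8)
The plan is to prove the theorem by contradiction, leveraging \Cref{cor:action} to transfer decision probabilities between indistinguishable ensembles. Suppose for contradiction that some algorithm \alg{} satisfies a validity property $\Phi$ that is strictly stronger than \optimal{C} Validity. By clause~(3) of the definition, there exist $\fz \le t$ and an input multiset $\Vin$ with $\maxmult{\Vin} - \fz < 2t+1$ such that in every ensemble with $\fz$ corrupted parties on input $\Vin$, the probability of deciding a value in $\Vin$ strictly exceeds $1 - \frac{\fz}{n-t}$. First I would apply \Cref{lem:subset} to obtain a sub-multiset $\mathcal{M} \subset \Vin$ with $|\mathcal{M}| = n - t - \fz$ and $\maxmult{\mathcal{M}} \le t$. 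The idea is that $\mathcal{M}$ represents the inputs of the $n - t - \fz$ honest parties that the weak, static adversary will allow to participate, while delaying the other $t$ honest parties and corrupting $\fz$ parties.

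Next I would construct a specific ensemble~\scn-A realizing this scenario. The weak static adversary corrupts $\fz$ parties immediately after the root, delays $t$ honest parties indefinitely (scheduling them only after every honest party has decided), and has the corrupted parties behave so as to mimic honest parties holding certain input values. In \scn-A, the decision is driven by the $n - t - \fz$ honest parties whose inputs form~$\mathcal{M}$, together with the $t+\fz$ remaining participants (delayed honest and corrupted). Since the total probability of deciding a value in $\Vin$ exceeds $1 - \frac{\fz}{n-t}$, and the probabilities across all decision values sum to the termination probability~$1$, by an averaging/pigeonhole argument there must exist some particular value $v \in \mathcal{M}$ whose decision probability strictly exceeds its ``fair share'' $\frac{\mult{v}{\mathcal{M}}}{n-t}$. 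This is the value the adversary will exploit.

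The crux is then to build a second ensemble~\scn-B, indistinguishable to the honest deciders from~\scn-A, in which $v$ is proposed \emph{only} by corrupted parties. In \scn-B I would relabel the roles: the $\mult{v}{\mathcal{M}}$ parties that held input $v$ in~\scn-A are now among the corrupted parties (who simply simulate holding $v$ and run honestly), while the delayed parties and the genuinely corrupted slots are rearranged so that no honest participating party has input~$v$. The adversary's strategy in \scn-B is crafted so that the local ensembles of all the deciding honest parties are identical to those in \scn-A. By \Cref{cor:action}, the probability of deciding~$v$ is then identical in \scn-A and \scn-B. But in \scn-B, $v \notin \Vin$ restricted to honest inputs, so deciding $v$ is a ``bogus'' decision; since this probability exceeds the fair share, the total probability of deciding a genuine honest value in \scn-B falls below $1 - \frac{\fz}{n-t}$, violating clause~(2) of strict strength. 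This contradiction completes the argument.

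The main obstacle I anticipate is the careful bookkeeping in the indistinguishability construction: I must verify that the weak static adversary, which cannot observe honest local states, can nonetheless enforce that \scn-A and \scn-B induce identical $i$-local ensembles for \emph{every} deciding honest party~$p_i$ simultaneously. This requires that the corrupted parties in \scn-B send exactly the same messages, with the same timing and distribution, as the corresponding honest-plus-corrupted coalition does in \scn-A, so that no honest decider can tell which scenario it inhabits. Establishing that such a matching strategy exists, and that the counting argument pinpointing $v$ with more than its fair share is valid given that probabilities sum to~$1$ over a countable decision domain, are the delicate points; the remaining algebra comparing $\frac{\mult{v}{\mathcal{M}}}{n-t}$ against the guaranteed bound is routine.
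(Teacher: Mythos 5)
Your overall skeleton---proof by contradiction from clause~(3), \Cref{lem:subset}, a base ensemble in which the weak static adversary corrupts $\fz$ parties that mimic honest behavior, delays $t$ honest parties, and then a role-swapped indistinguishable ensemble analyzed through \Cref{cor:action}---is exactly the paper's strategy, and your \scn-B construction (corrupting the parties that held $v$ and erasing $v$ from the input multiset) matches the paper's \Cref{exec3}. The genuine gap is your pigeonhole step. From $\sum_{v\in\Vin}P^1(v) > 1-\frac{\fz}{n-t}$ you cannot conclude that some $v\in\mathcal{M}$ is decided with probability exceeding $\frac{\mult{v}{\mathcal{M}}}{n-t}$: the sum ranges over all of $\Vin$, and the values in $\Vin\setminus\mathcal{M}$ (the inputs of the $t$ delayed honest parties and the official inputs of the corrupted parties) also carry probability mass. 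Concretely, the algorithm might decide each $u\in\mathcal{M}$ with probability exactly $\frac{\mult{u}{\mathcal{M}}}{n-t}$ (these sum to $\frac{n-t-\fz}{n-t}=1-\frac{\fz}{n-t}$) and, with the remaining probability $\frac{\fz}{n-t}$, decide some $w\in\Vin\setminus\mathcal{M}$ that is the input of a delayed party in $T$. This satisfies the hypothesis of clause~(3), yet no value in $\mathcal{M}$ exceeds its fair share, and your \scn-B has nothing to corrupt. The paper therefore needs a second, separate case: if no $u\in\mathcal{M}$ exceeds its fair share, then some $w\in\Vin\setminus\mathcal{M}$ with $w\ne v_f$ has $P^1(w)>0$; it then builds \Cref{exec4}, with \emph{no} corruptions ($\newf=0$), in which every party that held $w$ is re-assigned a different input (so $w\notin\Vin^4$) and which is probabilistically indistinguishable to the deciding parties from the base ensemble. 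Deciding $w$ with positive probability then forces the probability of deciding in $\Vin^4$ strictly below $1 = 1-\frac{0}{n-t}$, a contradiction. Your single-case argument does not cover this branch.

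Two further points need tightening. First, you implicitly assume that in \scn-A the participating honest parties decide with probability $1$ (``probabilities\ldots sum to the termination probability~$1$''). But the strategy ``delay $T$'s messages until all of $M$ decides'' is only a legal adversary strategy (every message must eventually be delivered) if $M$ indeed decides almost surely without hearing from $T$, which is precisely what must be established. The paper resolves this circularity with an auxiliary ensemble (\Cref{exec2}) in which the parties of $T$ are the corrupted ones and send nothing---there Probabilistic Termination applies outright---and then transfers decision-with-probability-$1$ back to the base ensemble via \Cref{cor:action}. Second, in your \scn-B you must ensure that $v$ is absent from the \emph{entire} input multiset, not merely that ``no honest participating party has input $v$'': the paper's $\Vin$ deliberately includes the inputs of delayed honest parties and the official inputs of corrupted parties, so any delayed party holding $v$ must also be re-assigned some $v'\ne v$ (as the paper does in \Cref{exec3}); otherwise $v\in\Vin$ in \scn-B and deciding it violates nothing.
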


\begin{proof}

Assume, by way of contradiction, that there is such an algorithm~\alg.
In particular, there exist $\fz\le t$ and $\Vin$ that satisfy
$\maxmult{\Vin} -\fz< 2t+1$, and in every ensemble of \alg{} with $\fz$
(maximal corrupted parties in any path in the ensemble) and the input
multiset $\Vin$ the probability to decide on a value in $\Vin$ is strictly higher than $1 -
\frac{\fz}{n-t}$.
To show the contradiction we use $\Vin$ to construct an ensemble $\ens$
of \alg{} with a multiset $\Vin' \neq \Vin$ of input values and
$\newf' \le t$ corrupted parties such that the probability to agree on a
value in $\Vin'$ in $\ens$ is strictly lower than
$1-\frac{\newf'}{n-t}$. This contradicts either condition (2) or (3) from the properties that $\Phi$ must satisfy in order to be strictly
stronger than \optimal{C} Validity.
We next describe a few ensembles under different adversarial strategies
and use Probabilistic Indistinguishability between them to prove the
theorem.

\begin{ensemble}
\label{exec1}

$\Vin^1 = \Vin$ and $\newf^1 = \fz$.
By Lemma~\ref{lem:subset}, there is a multiset $\mathcal{M} \subset \Vin^1$
such that $|\mathcal{M}| = n-t-\fz$ and $\maxmult{\mathcal{M}} \leq t$.
Let $M \subset \Pi$ be the set of parties with inputs in $\mathcal{M}$, i.e.,
the multiset $\mathcal{M}=\{v_i \mid p_i \in M\}$.
Let $F, T \subset \Pi$ be two sets of parties such that $|F| = \fz$, 
$|T| = t$, and $ \Vin \setminus \mathcal{M} =\{v_i \mid p_i \in F\cup T
\} $.
The adversary (statically) corrupts the parties in $F$ and delays all
messages from parties in $T$ until all honest parties in $M$
decide.
Messages sent among parties in $M \cup F$ are immediately delivered. 
Let $v_f \not\in V^1_{in}$. 
The corrupted parties act like honest parties that get $v_f$ as an
input.

\end{ensemble}

\begin{ensemble}
\label{exec2}

Consider a multiset of input values $\Vin^2$ that is identical to $\Vin^1$
except that parties in $F$ get $v_f$ as an input.
The adversary corrupts parties in $T$.
Messages sent among parties in $F \cup M$ are immediately delivered.
Finally the corrupted parties send no messages.

\end{ensemble}

\begin{figure*}[th]
    \centering
    \begin{subfigure}[t]{0.47\textwidth}
        \centering
        \includegraphics[height=1.2in]{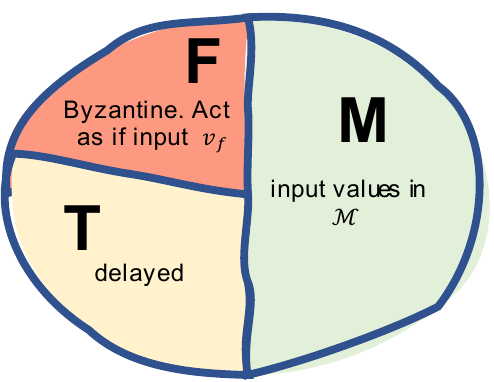}
        \caption{Ensemble~\ref{exec1}. Parties in $F$ are Byzantine that
        act as if they are honest with input $v_f \not\in \Vin^1$.
    Messages from parties in $T$ are delayed.}
    \end{subfigure}\hfill%
	\hfill
    \begin{subfigure}[t]{0.47\textwidth}
        \centering
        \includegraphics[height=1.2in]{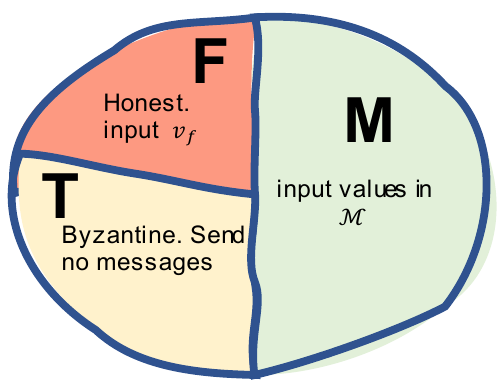}
        \caption{Ensemble~\ref{exec2}. Parties in $F$ are honest and
        $v_f \in \Vin^2$. Parties in $T$ are Byzantine that send no
        messages.}
    \end{subfigure}
    \caption{Sets $F, T$ and $M$ divide the
        parties into 3 disjoint sets. In both ensembles the set $M$
        contains parties with inputs from $\mathcal{M}$. Ensemble~\ref{exec1} and \cref{exec2} are probabilistically indistinguishable to all
parties in~$M\!$.}
    \label{fig:ensemble2}
\end{figure*}

Clearly, by \Cref{def:PI}, \cref{exec1}
and \cref{exec2} are probabilistically indistinguishable to all processes
in~$M$ (see illustration in Figure~\ref{fig:ensemble2}).
By the Probabilistic Termination property, all parties in $M$ decide in
Ensemble~\ref{exec2} with probability~1.
Therefore, \Cref{cor:action} implies that, in Ensemble~\ref{exec1}, all
parties in $M$  decide with probability 1 as well.

For every value $v \in \Vin$ denote by $P^1(v)$ the probability that
the honest parties in $M$ decide on~$v$ in Ensemble~\ref{exec1}.
By  assumption, the probability to decide on a value
in $\Vin$ is strictly higher than $1- \frac{\fz}{n-t}$ in
Ensemble~\ref{exec1}, i.e., \mbox{$\sum_{v \in \Vin} P^1(v) > 1-
\frac{\fz}{n-t}$}.
Therefore, there are two cases for Ensemble~\ref{exec1}:
\begin{itemize}
  
  \item First, there is a value $u \in \mathcal{M}$ s.t.\
  $P^1(u) > \frac{\mult{u}{\mathcal{M}}}{n-t}$.
  
  
   \item Otherwise, there is a value $w \not\in \mathcal{M} \cup \{v_f\}$ 
   such that $P^1(w) > 0$.
  
\end{itemize}

\noindent \textbf{First case:}
For the first case, let $U = \{p_i \in M \mid v_i = u \}$.
Since $\maxmult{\mathcal{M}} \leq t$, we get $|U| \leq t$.  
Consider the following ensemble:

\begin{ensemble}
\label{exec3}
The multiset of input values $\Vin^3$ is identical to
$\Vin^1$ except (1) parties in
$F$ get $v_f$ as an input; and all other parties that get $u$ as an
input in Ensemble~\ref{exec1} get some value $v' \neq u$.
(Note that $u \not\in \Vin^3$).
The adversary corrupts all parties in $U$ s.t.\ corrupted parties act as
honest parties that got $u$ as an input. $\newf^3 = |U|$.
Messages from parties in $T$ are delayed until all parties in $M
\setminus U$ decide, and messages sent among parties in $M \cup F$ are
immediately delivered.
\end{ensemble}

\begin{figure*}[th]
    \centering
    \begin{subfigure}[t]{0.47\textwidth}
        \centering
        \includegraphics[height=1.2in]{scenario1.pdf}
        \caption{Ensemble~\ref{exec1}. Parties in $F$ are Byzantine that
        act as if they honest with input $v_f \not\in \Vin^1$.
    Messages from parties in $T$ are delayed.}
        \end{subfigure}\hfill%
    \hfill
    \begin{subfigure}[t]{0.47\textwidth}
        \centering
        \includegraphics[height=1.2in]{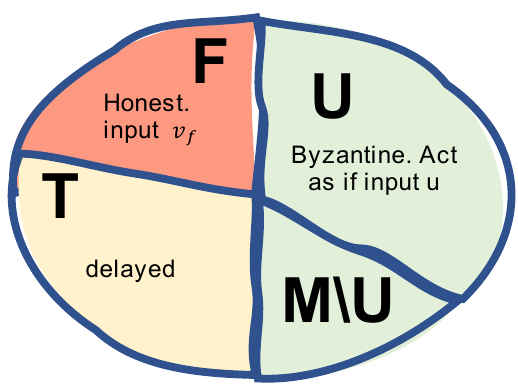}
        \caption{Ensemble~\ref{exec3}. Parties in $F$ are honest and
        $v_f \in \Vin^3$. Messages from parties in $T$ are delayed.
        The set $U \subset M$ contains parties that input $u$
        in Ensemble~\ref{exec1}. Here they are Byzantine that
        act as if they honest with input $u$. The value $u$ is not in
        $\Vin^3$.}
    \end{subfigure}
    \caption{In both ensembles parties in $M\setminus U$
        input the same values. Ensemble~\ref{exec1} and
        \Cref{exec3} are probabilistically indistinguishable to all
        processes in~$M \setminus U$.}
    \label{fig:ensemble3}
\end{figure*}

Note that Ensemble~\ref{exec1} and Ensemble~\ref{exec3} are
probabilistically indistinguishable for parties in $M \setminus U$ (see
illustration in Figure~\ref{fig:ensemble3}).
Therefore, by \Cref{cor:action}, the probability of parties in $M
\setminus U$ to agree on $u$ in Ensemble~\ref{exec3} is higher than
$\frac{|U|}{n-t}$.
Consequently, the probability to agree on a value in
$V^3_{in}$ in Ensemble~\ref{exec3} is strictly lower than $1 -
\frac{|U|}{n-t}$ = $1 - \frac{\newf^3}{n-t}$.
This contradicts the assumption that \alg{} satisfies a strictly stronger validity
property than \optimal{C} Validity.

\noindent \textbf{Second case:}
In this case there is a value $w \not\in \mathcal{M} \cup \{v_f\}$ such that $P^1(w) > 0$. 
Consider the following ensemble:

\begin{ensemble}
\label{exec4}

Consider $\Vin^4$ to be the multiset of input values that is
identical to $\Vin^1$ except parties in $F$ get $v_f$ as an input and
all other parties that are assigned~$w$ from~$\Vin^1$ as an input, are assigned instead some $v\ne w$ ($w \not\in V^4_{in}$).
The adversary corrupts no parties ($\newf^4 =0$).
Messages from parties in $T$ are delayed until all parties in $M$
decide, and messages sent among parties in $M \cup F$ are 
immediately delivered.

\end{ensemble}

\begin{figure*}[th]
    \centering
    \begin{subfigure}[t]{0.47\textwidth}
        \centering
        \includegraphics[height=1.2in]{scenario1.pdf}
        \caption{Ensemble~\ref{exec1}. Parties in $F$ are Byzantine that
        act as if they honest with input $v_f \not\in \Vin^1$.
    Messages from parties in $T$ are delayed.}
        \end{subfigure}\hfill%
    \hfill
    \begin{subfigure}[t]{0.47\textwidth}
        \centering
        \includegraphics[height=1.2in]{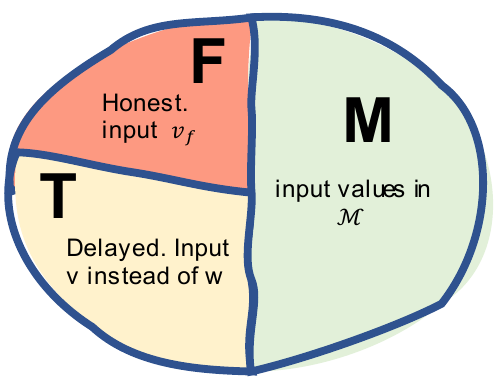}
        \caption{Ensemble~\ref{exec4}. All parties are honest.
        Messages from parties in $T$ are delayed.
        The value $v_f \in \Vin^4$, whereas the value $w \not\in
        \Vin^4.$}
    \end{subfigure}
    \caption{Ensemble~\ref{exec1}
and \Cref{exec4} are probabilistically indistinguishable to all
processes in~$M\!$.}
    \label{fig:ensemble4}
\end{figure*}

Note that \Cref{exec1} and \Cref{exec4} are
probabilistically indistinguishable for parties in $M$ (see illustration
in Figure~\ref{fig:ensemble4}).
Therefore, by \Cref{cor:action}, the probability of a party in
$M$ to agree on $w$ in \Cref{exec4} is higher than $0$.
Thus, since $w \not\in V^4_{in}$, the probability to agree on a value
in $\Vin^4$ in \Cref{exec4} is strictly lower than $1$. 
Since  $\newf^4=0$ in \Cref{exec4}, we get that the
probability to agree on a value in $\Vin^4$ in \Cref{exec4} is
strictly lower than $1 - \frac{\newf^4}{n-t}$. 
This contradicts the assumption that \alg{} satisfies 
a validity
property $\Phi$ that is strictly stronger than \optimal{C} Validity, completing the proof.
\end{proof}

The lower bound result of \Cref{thm:lower} is tight, as the following theorem shows: 

\begin{theorem}
	\label{thm:upper}
	There exists an asynchronous Byzantine Agreement algorithm that satisfies  \optimal{C} Validity against a strong and adaptive adversary.
\end{theorem}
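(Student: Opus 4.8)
The plan is to establish the theorem constructively: I will exhibit an asynchronous Byzantine Agreement algorithm \alg{} and verify that every ensemble of \alg{} satisfies Agreement, Probabilistic Termination, and \optimal{C} Validity, even when the adversary is strong and adaptive. The algorithm builds on a standard validated asynchronous agreement skeleton (in the style of \cite{VABA}): each party first commits its proposal through a provable/reliable broadcast whose completion guarantees that the committed value equals the proposer's input whenever the proposer was honest at commit time; parties then wait for a quorum of $n-t = 2t+1$ such commitments, invoke a shared probabilistic object \Oracle{} to elect one of the committed parties \emph{uniformly at random}, and adopt the elected party's committed value, with the usual lock/key machinery enforcing a common decision. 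Agreement and Probabilistic Termination are inherited from this skeleton: quorum intersection guarantees that honest parties never decide differing values, and the random election ensures that a ``good'' view is reached with positive probability in each attempt, so that every honest party decides with probability~$1$.

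The two regimes of \optimal{C} Validity are then handled separately. In the deterministic regime $\maxmult{\Vin}-\newf \ge 2t+1$, at least $2t+1$ honest parties share a common input value~$v$; since any two quorums of size $n-t = 2t+1$ intersect in at least $t+1 > \newf$ parties and hence in an honest party, $v$ is the only value that can ever be committed and locked. Consequently every honest party that decides must decide~$v \in \Vin$, independently of the coin. This reduces the theorem, in the remaining regime, to bounding the probability of deciding a value outside~$\Vin$.

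The crux is the probabilistic regime $\maxmult{\Vin}-\newf < 2t+1$. Here the decided value is the committed value of the party elected uniformly at random from the committing quorum of $n-t$ parties. Since the adversary corrupts at most $\newf$ parties in any path of the ensemble, at most $\newf$ of these $n-t$ committed parties are Byzantine, while every honest committed party contributes a value in~$\Vin$. The uniform election therefore selects a Byzantine (and possibly bogus) proposal with probability at most $\frac{\newf}{n-t}$, so a value in~$\Vin$ is decided with probability at least $1-\frac{\newf}{n-t}$, matching the bound. The role of the framework of \Cref{sec:probInd} here is to make this counting argument rigorous: the event ``the elected party is Byzantine'' is a measurable event in the probability space induced by the ensemble, and its probability is determined by the weights of the edges representing the single access to \Oracle{}, which by the object's specification are uniform over the quorum.

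I expect the main obstacle to be reconciling this clean single-shot counting bound with two features of the real setting: asynchrony, which forces the election to be repeated across views until agreement is reached, and adaptivity, which lets the adversary corrupt the elected party after observing the coin. For adaptivity, the commitment step must be arranged so that \Oracle{} is accessed only after the quorum's values are already fixed; then a leader corrupted post-election still carries an honest committed value in~$\Vin$, so adaptivity cannot inflate the bogus-decision probability. For the multi-view structure, I would argue that the probability of ever deciding a value outside~$\Vin$ remains bounded by $\frac{\newf}{n-t}$ by conditioning on the first view in which a decision is locked and showing that the per-view election is uniform and independent of the committed values, so that no retry can raise the cumulative bogus probability above the single-shot bound. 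Making this conditioning argument precise against a strong adaptive adversary, while simultaneously preserving Agreement and almost-sure termination, is the technically delicate part of the construction.
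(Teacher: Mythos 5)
Your overall route differs from the paper's: the paper does not build a new protocol from scratch, but proves the theorem by sequentially composing two existing algorithms --- the Raynal--Mostefaoui (RM) protocol, which satisfies weak validity plus a \emph{non-intrusion} property (honest parties only ever decide values in $\Vin\cup\{\bot\}$), followed, only when RM outputs $\bot$, by the AMS/VABA protocol, whose leader-election analysis yields the $1-\frac{\newf}{n-t}$ bound via a geometric-series argument over rounds. Your probabilistic-regime counting is essentially the same idea as the paper's lemma for AMS (a uniformly elected commitment out of a quorum containing at most $\newf$ Byzantine ones), modulo the multi-view and asynchrony bookkeeping that you flag yourself; that part could plausibly be completed.

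The genuine gap is in your deterministic regime. In the VABA-style skeleton you describe, each party commits \emph{its own} value through its own broadcast instance; completion of a commitment never requires other parties' inputs to agree with it. Hence when $\maxmult{\Vin}-\newf\ge 2t+1$ with $\newf>0$, a Byzantine party can still complete a commitment carrying a bogus value $v_f\notin\Vin$, and by your own election analysis that value is elected and decided with probability on the order of $\frac{\newf}{n-t}>0$ --- violating the first clause of \optimal{C} Validity, which demands that decisions lie in $\Vin$ with certainty in that regime. Your quorum-intersection claim (``$v$ is the only value that can ever be committed and locked'') belongs to a different mechanism, in which committing a value requires $2t+1$ endorsements \emph{of that value} and honest parties endorse only their own input; but that mechanism destroys termination: in the split-input regime (and even in the first regime, once the adversary delays $t$ of the $v$-holders and the Byzantine parties withhold endorsements) no value ever gathers a quorum, so nothing commits and Probabilistic Termination fails. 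This tension between blocking bogus commitments and preserving liveness is precisely what the paper's composition resolves: RM's non-intrusion guarantees that any non-$\bot$ decision is in $\Vin$ (first clause), and falling back to AMS after a $\bot$ decision restores liveness while the election bound gives the second clause. Without an ingredient playing the role of non-intrusion, the single-protocol construction you sketch cannot satisfy both clauses simultaneously.
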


In fact, this theorem shows a bit more than that the bound is tight. 
While \Cref{thm:lower} showed that no better than Qualitative Validity can be achieved  even against a weak and static adversary, \Cref{thm:upper} shows that it {\it is} achievable, and this can be done against a much stronger adversary.

We defer the proof of \Cref{thm:upper} to Appendix~\ref{app:upper}, where we show  that a sequential composition of
two known algorithms, from~\cite{raynal2017signature} and~\cite{VABA},
yields the first Agreement protocol that satisfies the
Agreement, Probabilistic Termination, and \optimal{C} Validity properties.
In a nutshell, we first run the algorithm
in~\cite{raynal2017signature}.
By its weak validity property, if all honest parties start with the same value,
then they all decide on it.
Otherwise, they decide on $\bot$, in which case we invoke the algorithm
in~\cite{VABA} and output its decision value.
The resulting asynchronous algorithm achieves \optimal{C} Validity.
Moreover, it does so in expected
constant number of rounds, using $O(n^2)$ communication
complexity, and is resilient against $t < n/3$ Byzantine parties.
Each of these parameters is known to be optimal in this
setting~\cite{FLP85,VABA,bracha1987asynchronous}.

\section{Discussion}
\label{sec:discussion}

Validity is one of the essential properties that agreement algorithms are required to satisfy. For multi-valued agreement, several distinct versions of validity have been studied in the literature over the years. 
Indeed, the desire to provide a variety of quality and fairness features in the blockchain world has given rise to new validity properties that are especially suited to randomized agreement algorithms. 
In this work we have introduced a new, probabilistic, validity property for multi-valued agreement. 
Called {\it \optimal{C} Validity}, this notion is strictly stronger than two  popular validity conditions, which  are incomparable to one another. Intuitively, it  bounds the probability that corrupted parties will cause the algorithm to decide on a bogus  value. 
Our main theorem is that, in a precise sense, \optimal{C} Validity is the strongest validity property that can be satisfied by an asynchronous Byzantine Agreement algorithm.

In order to prove our lower bound,  we represented adversary strategies in terms of a mathematical object called an ensemble, and  introduced the notion of probabilistic indistinguishability between ensembles.  
This framework facilitates the statement of probabilistic properties of algorithms, and the proof of lower bounds on such properties. 
Our framework is applicable beyond the proof of our theorem.  For example, as discussed in  \Cref{sec:related}, the lower bound proof  by Attiya and Censor-Hillel in~\cite{attiya2010lower}  constructs multiple  (deterministic) indistinguishability chains to account for different random tapes. Their construction can be replaced by a single 
probabilistic indistinguishability chain among ensembles.  
Another place where our framework fits seamlessly is the probabilistic lower bound theorem of~\cite{abraham2019communication}. Their technical argument shows that two adversaries are indistinguishable to a particular party. While they do not provide a formal definition of indistinguishability between adversaries, interpreting their proof using our definitions fills this gap perfectly. 
Moreover, as we show in \Cref{sec:formalize}, the added transparency into their proof that is obtained by couching it using our framework allows us to strengthen their claim: Their probabilistic lower bound holds for a strictly weaker adversary than is claimed in their theorem. 

Probabilistic indistinguishability captures the fact that a given party will perform the same actions, with the same probability, in both ensembles. Indeed, our theorem provides a tight bound on the probability that an honest party will decide on a good value. 
However, in the design and analysis of randomized distributed algorithms one may be interested in the correlation between actions of several parties. E.g., 
whether parties decide on the same value (i.e., satisfy Agreement) with a sufficiently high probability. 
The notion of probabilistic indistinguishability does not account for 
such correlation among actions. 
An interesting topic for future investigation is to formulate notions that will account for the correlation among actions of different parties. We expect that such notions will be generalizations of our definition of probabilistic indistinguishability. Their precise form is left as an open problem.

%
%
%
%
%
%
%
%
%
%





\bibliography{bibliography}

\newpage
\appendix

\section{Proving Theorem 4 from \cite{abraham2019communication} (ACDNPLS)}
\label{sec:formalize}
In \emph{(binary) Byzantine Broadcast}, an \textit{a priori} fixed designated \emph{sender} starts out with an input bit $b\in \{0,1\}$.
An algorithm \alg{} solves binary Byzantine Broadcast with probability at least~$x$ if, in every ensemble of \alg, with the probability of at least~$x$ all of the three following properties hold. 
\begin{itemize}
	\item Agreement. All honest parties that output a bit, output the same bit.
	
	\item Termination. Every honest party outputs a bit.

	\item Validity. If the sender is honest and the sender’s input is $b$, then all	honest parties output $b$.
\end{itemize}

The following theorem from~\cite{abraham2019communication} considers a model with a non-uniform p.p.t. strongly adaptive adversary.
That is, the adversary can (1) leverage randomness in its favor, \ and (2) observe that a message is sent at time $\tp$ by any party $p_i$ to any other party, decide to adaptively corrupt $p_i$, and remove the messages sent by $p_i$ at time $\tp$.

\begin{theorem} [ACDNPLS \cite{abraham2019communication}]
	If a protocol solves Byzantine Broadcast with $\frac{3}{4} + \epsilon$ probability against a non-uniform p.p.t. strongly adaptive adversary, then in expectation, honest nodes collectively need to send at least $(\epsilon t)^2$ messages.
\end{theorem}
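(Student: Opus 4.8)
The plan is to recast the Dolev--Reischuk-style counting argument of~\cite{abraham2019communication} as a clean probabilistic-indistinguishability argument, using \Cref{def:PI} and \Cref{cor:action} to make rigorous the informal claim that a well-chosen honest party cannot tell two adversaries apart. Fix the designated sender to be honest and consider two ``worlds'': in world~$b$ the sender's input is $b\in\{0,1\}$. Since the sender is honest, Agreement together with Validity demands that \emph{all} honest parties output~$b$, and by assumption this happens with probability at least $\frac34+\epsilon$ in each world. First I would build, for each $b$, an adversary strategy that (i) runs the sender honestly with input~$b$, (ii) picks a target party~$p_i$ uniformly among the non-sender parties, and (iii) isolates~$p_i$ by exercising the after-the-fact removal power: whenever some party is about to have a message delivered to~$p_i$, the adversary corrupts that sender and deletes the message, as long as the corruption budget is not exhausted. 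Let $\scn_0$ and $\scn_1$ be the resulting ensembles.

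The crux is the behaviour of~$p_i$ along the paths on which isolation never runs out of budget. On such a path $p_i$ receives no incoming messages whatsoever, so its entire local history is generated by its own local steps and its accesses to probabilistic objects; in particular this history is \emph{independent of the sender's bit}~$b$. Consequently the sub-trees of $\scn_0$ and $\scn_1$ spanned by the successful-isolation paths induce the very same $i$-local ensemble, so by \Cref{def:PI} the two ensembles are probabilistically indistinguishable to~$p_i$ there. \Cref{cor:action} then tells us that $p_i$ outputs any particular value $v\in\{0,1\}$ with exactly the same probability in $\scn_0$ as in $\scn_1$, conditioned on isolation. But correctness wants $p_i$ to output~$0$ in world~$0$ and~$1$ in world~$1$; since the conditional output distribution is identical, at most one of these can be likely. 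Writing $\rho$ for the probability that isolation fails, this yields, schematically,
\[
  P_{\scn_0}(\text{all honest output }0)+P_{\scn_1}(\text{all honest output }1)\ \le\ 1+\rho ,
\]
so the two worlds cannot both be correct with probability $\frac34+\epsilon$ unless $\rho$ is bounded away from~$0$.

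The final step converts this into the message bound by observing that isolation can only fail when the in-degree of~$p_i$ (its number of distinct honest senders) exceeds the corruption budget. Averaging over the uniformly random choice of target, the expected in-degree is at most the expected honest message count divided by~$n$, so a Markov estimate bounds~$\rho$ in terms of the message complexity. Forcing $\rho$ to be large enough to be compatible with $\frac34+\epsilon$ reliability therefore forces the honest parties to send many messages, and tracking the constants gives the $(\epsilon t)^2$ lower bound. As a by-product, the argument never uses the full strength of a strongly adaptive adversary beyond the ability to suppress messages aimed at a single party, so the bound in fact holds against a strictly weaker adversary than the one in the theorem's statement, as claimed in~\Cref{sec:discussion}.

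I expect the genuine difficulty to lie in this last, quantitative, step: one must fix the corruption budget and the target distribution so that the Markov estimate on~$\rho$ meshes exactly with the $\frac34+\epsilon$ threshold and yields the precise $\epsilon^2 t^2$ scaling rather than a weaker or incomparable bound. A secondary technical point is that \Cref{def:PI} compares \emph{entire} ensembles, whereas indistinguishability holds here only on the successful-isolation event; I would resolve this by engineering $\scn_0$ and $\scn_1$ so that their $i$-local ensembles coincide on the relevant sub-tree, and by collecting all budget-overflow paths inside the single slack term~$\rho$.
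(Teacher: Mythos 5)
Your plan diverges from the paper's proof in a way that matters. The paper does not isolate a single target and compare a sender-input-$0$ world with a sender-input-$1$ world. It fixes the sender's bit to $0$ in \emph{both} ensembles, first extracts a set $N_0$ of $\lceil n/2\rceil$ parties that output $0$ with probability at most $1/2$ when they receive no messages, and takes $F\subset N_0$ with $|F|=t/2$. In Ensemble~A the parties in $F$ are statically corrupted and merely \emph{ignore} their first $t/2$ incoming messages; all message counting is done inside this one ensemble, and a single party $p_f\in F$ is chosen (by an argmax, so the adversary needs no randomness) to be the one that, with probability $>1-4\epsilon$, gets at most $t/2$ messages. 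Only then is Ensemble~B built, in which $p_f$ is honest but its first $t/2$ incoming messages are suppressed by adaptively corrupting their senders; the indistinguishability invoked via \Cref{def:PI} and \Cref{cor:action} is for the honest parties \emph{outside} $F$, between A and B (and it holds for the whole ensembles, not conditionally), and the contradiction is that those parties output $0$ while honest $p_f$, seeing nothing, fails to output $0$ with probability $\ge\frac12(1-4\epsilon)$, violating Agreement/Termination with probability $>\frac14-\epsilon$.

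The genuine gap in your proposal is exactly the quantitative step you flagged, and it is not a matter of tuning constants: it cannot yield $(\epsilon t)^2$. With a single isolated target, isolation fails only when more than $t$ distinct parties attempt to send to it, so even after you force the failure probability $\rho$ to be a constant (your inequality gives $\rho_0+\rho_1\ge\frac12+2\epsilon$), you only conclude $\mathbb{E}[\text{honest messages}]\ge (t+1)\cdot\Omega(1)=\Omega(t)$, which is far weaker than $\epsilon^2t^2$ once $\epsilon^2 t$ is large. Your proposed rescue --- ``averaging over the uniformly random choice of target, the expected in-degree is at most the expected honest message count divided by $n$'' --- is invalid in this setting: each choice of target defines a \emph{different} adversary strategy, hence a different ensemble, and the contradiction hypothesis bounds the expected message count separately within each ensemble; in-degrees measured in different ensembles cannot be summed into one total, because the protocol's message pattern may depend on which party is being suppressed. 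The paper escapes this precisely by placing $t/2$ potential targets inside a single ensemble (the ignoring set $F$), so that the averaging over targets and the argmax selection of $p_f$ happen within one probability space --- this is where the second factor of $t$ comes from. Two smaller problems: \Cref{def:PI} and \Cref{cor:action} compare entire ensembles, so your indistinguishability ``conditioned on successful isolation'' is not licensed by the framework as stated (the paper's A/B construction is engineered so no conditioning is needed); and your adversary is randomized (uniform target choice), which both sits awkwardly in the ensemble formalism and contradicts your closing claim that the argument shows adversary randomization to be unnecessary.
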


Using our definitions from this paper, we can now better formalize their statement (and slightly strengthen it).
Our proof follows the outline of the proof in~\cite{abraham2019communication}.
 \\[\parskip]

\noindent
\textbf{Theorem 2$'$.}
\textit{If an algorithm \alg{} solves (in a model with a strongly adaptive polynomial time adversary) Byzantine Broadcast with probability at least $\frac{3}{4} + \epsilon$, then there exists an ensemble of \alg{} in which honest parties collectively send at least $(\epsilon t)^2$ messages in expectation.}


\begin{proof}
	Let \alg{} be an algorithm that solves Byzantine Broadcast with probability at least $\frac{3}{4} + \epsilon$.
	Assume by way of contradiction that in \textbf{every ensemble} of \alg, the honest parties collectively send fewer than $(\epsilon t)^2$ messages in expectation.

	Without loss of generality, assume that there exist $\lceil n/2 \rceil$ parties each of which outputs 0 with probability at most 1/2 if they receive no messages. (Otherwise, then there must exist $\lceil n/2 \rceil$ nodes that output 1 with at most 1/2 probability if they receive no messages, and the entire proof follows from a symmetric argument.)
	Formally, the set $N_0\subset \Pi$ is of size $\lceil n/2 \rceil$, and for every $p_i\in N_0$ it holds that in every ensemble \scn{} of \alg{}, $P_\scn[p_i \text{ outputs 0}\mid p_i \text{ receives no messages in the path}]\le 1/2$. 
	Let $F\subset N_0$ be a set of $t/2$ these parties, not containing the designated sender. Note that these nodes may output 1 or they may simply not terminate if they receive no messages. (We can always find such an $F$ because $t/2 < \lceil n/2 \rceil$.)
	Consider the following ensemble \scn-A. \\[\parskip]

\noindent
\textbf{Ensemble A.}
		\textit{The sender's input bit is 0, all messages that are sent, are synchronously delivered, and the parties in $F$ are corrupted.
		Specifically, a party in $F$ behaves honestly (according to its protocol) except for \ (a) not sending messages to any other party in $F\!$, and \ (b) ignoring the first $t/2$ messages it receives from parties in $\Pi\setminus F$.}\\[\parskip]

	Let $z_{\text{A}}$ be a random variable,  in \scn-A, denoting the number of messages sent by parties in $\Pi\setminus F$ to~$F$. 
	By the assumption we have that $\mathbb{E}[z_{\text{A}}] < (\epsilon t)^2$.
	Let $X_1$ be the event that $z_{\text{A}} \le \frac{(\epsilon t)^2}{2\epsilon}$.
	By Markov’s inequality, $P_{\scn\text{-A}}[z_{\text{A}}> \frac{1}{2\epsilon} \mathbb{E}[z_{\text{A}}]] < 2\epsilon$.
	Thus, the probability of the event $X_1$ in \scn-A is $P_{\scn\text{-A}}[z_{\text{A}} \le \frac{\epsilon t^2}{2}] \ge P_{\scn\text{-A}}[z_{\text{A}} \le \frac{1}{2\epsilon} \mathbb{E}[z_{\text{A}}]] > 1 - 2\epsilon$.
	
	Let $p_f$ be the party in $F$ with the highest probability of receiving at most $t/2$ from the first $\frac{\epsilon t^2}{2}$ messages sent by honest parties in \scn-A.
	Formally, denote by $x_i$ the random variable, in \scn-A, corresponding to the number of messages received by $p_i$ out of the first $\frac{\epsilon t^2}{2}$ messages sent by honest parties.
	Then $p_f \triangleq \arg\max\limits_{p_i\in F} P_{\scn\text{-A}}[x_i \le t/2]$.
	Notice that in each path in \scn-A there are a total of at most $\frac{\epsilon t^2}{2}$ messages to distribute among the $t/2$ parties in $F$.
	Therefore, fewer than $\epsilon t$ parties in $F$ receive more than $t/2$ messages, and at least $|F|-\epsilon t = (1/2 - \epsilon) t$ parties receive at most $t/2$ of the mentioned messages.
	This implies that the expected probability of a party in $F$ to receive at most $t/2$ of the messages is at least $\frac{|F|-\epsilon t}{|F|} = 1 - 2\epsilon$.
	Let $X_{p_f}$ be the event in \scn-A that $x_f \le t/2$.
	By choosing $p_f$ in the way we did, we obtain that $P_{\scn\text{-A}}[X_{p_f}]\ge 1 - 2\epsilon$.
	
	In \scn-A, the probability that at most $\frac{(\epsilon t)^2}{2\epsilon}$ messages from honest parties are sent to $F$, whilst party $p_f$ receives at most $t/2$ of those messages is:
	$$
	P_{\scn\text{-A}}[X_1 \cap X_{p_f}] = P_{\scn\text{-A}}[X_1] + P_{\scn\text{-A}}[X_{p_f}] - P_{\scn\text{-A}}[X_1 \cup X_{p_f}] > (1-2\epsilon) + (1-2\epsilon) - 1 = 1 -4\epsilon.
	$$
	
	Now consider the following ensemble \scn-B, which is very similar to \scn-A. \\[\parskip]

\noindent
\textbf{Ensemble B.}	
		\textit{The sender's input bit is 0, and parties in $F\setminus \{p_f\}$ are corrupted and behave exactly as in \scn-A.
		In addition, parties in $\Pi\setminus F$ behave as in \scn-A (according to the algorithm~\alg) except that the first $t/2$ messages that are supposed to be sent by $\Pi\setminus F$ to $p_f$ are now omitted from~\scn-B.
		In order to do so, at most $t/2$ parties in $\Pi\setminus F$ are also corrupted (in an adaptive manner) in \scn-B.
		Other than this, the corrupted parties in $\Pi\setminus F$ behave exactly as in \scn-A (including sending later messages to $p_f$). } \\[\parskip]

We observe that for a party $p_i\in \Pi\setminus F$ that is honest in \scn-B, the ensembles \scn-A and \scn-B are probabilistically indistinguishable.
Therefore, by \Cref{cor:action}, their protocols prescribe the same (possibly probabilistic) behavior in both of the ensembles.
The fact that all honest parties act according to \alg, together with the fact that the maximal number of corrupted parties in \scn-B at most $|F|-1+t/2 = t-1 \le t$, mean that \scn-B is an ensemble of~\alg.

By construction, $p_f$ receives no messages in \scn-B with the same probability as $P_{\scn\text{-A}}[X_1 \cap X_{p_f}] = 1 -4\epsilon$.
Recall that (by the definition of $F$), party $p_f$ outputs 0 with probability at most $1/2$ on the set of paths in which it receives no messages whatsoever.
Let $Y_f$ be the complementary event, in \scn-B, to $p_f$ outputting 0. That is, $Y_f$ is the event that $p_f$ outputs 1 or does not terminate at all.
Then $P_{\scn\text{-B}}[Y_f] > \frac{1}{2}(1 -4\epsilon)$.

Moreover, since \scn-A and \scn-B are probabilistically indistinguishable to honest parties in $\Pi\setminus F$, by \Cref{cor:action}, the probability of these parties to output 0 is equal in \scn-A and \scn-B, and is at least $\frac{3}{4} + \epsilon$ by assumption.
Let $Y_0$ be the event in \scn-B that all honest parties in $\Pi\setminus F$ output 0.
If both $Y_0$ and $Y_f$ occur, then this path (execution) in \scn-B violates either Agreement of Termination.
We get 
$$
P_{\scn\text{-B}}[Y_0 \cap Y_f] = P_{\scn\text{-B}}[Y_0] + P_{\scn\text{-B}}[Y_f] - P_{\scn\text{-B}}[Y_0 \cup Y_f] > (\frac{3}{4} + \epsilon) + \frac{1}{2}(1 -4\epsilon) - 1 = \frac{1}{4} - \epsilon.
$$
This contradicts the assumption that \alg{} solves Byzantine Broadcast with probability at least \mbox{$\frac{3}{4} + \epsilon$.}
\end{proof}

This proof followed the outline of the original one and made use of the same underlying structure. 
But the use of ensembles and probabilistic indistinguishability provided the arguments with a rigorous foundation and removed the ambiguity in their use of indistinguishability.
The improved transparency also made it obvious that enabling the adversary to employ randomization is unnecessary. Hence, the result could have been slightly strengthened by weakening the adversary's abilities.

\section{Pseudo-code for Construction of a local ensemble}
\label{sec:pseu...}

We construct $\scn_i$ in two reiterating steps:
(1) We connect a node directly to its closest descendants in which $i$'s local state does not change (and remove all nodes in between).
(2) For each node -- starting from the root and going down the tree -- we combine sons that represent the same local state by merging their subtrees and assigning its root (the combined node) the sum of the probabilities.
From a complexity perspective this equals to two simultaneous BFS runs on the (possibly infinite) $\ens$ tree.
\begin{algorithm}
	\caption{Building local ensemble from ensemble \ens~ for party $p_i$}
	
	\begin{algorithmic}[1]
	
	\Statex $\emph{NodesQueue}$ - An initially empty queue. 
	\Statex $root$ is the root of \ens
	\Statex 
	
	\State create node $v_{root_i}$
	\State $\ls{i}^{root} \gets \emph{root.configuration}[i]$
	\State $Start_{root_i} \gets \{\langle root,1 \rangle\}$
	\State $\emph{NodesQueue}.enqueue(\langle v_{root_i}, \ls{i}^{root}, Start_{root_i} \rangle)$
	\While{\emph{NodesQueue} not empty}
	
	   \State $\langle v_{\ls{i}}, \ls{i}, Start_{\ls{i}} \rangle \gets \emph{NodesQueue}.dequeue()$
	   \State $\emph{LabelNode}(v_{\ls{i}},\ls{i}, Start_\ls{i})$
	
	\EndWhile 
	
	\Statex
	
	\Procedure{LabelNode}{$v_{\ls{i}},\ls{i}, Start_\ls{i}$}
	   
	   \State $\emph{SetOfChildren} \gets \{\}$
	
	    \State $p_\ls{i}\gets \sum\limits_{v \in Start_\ls{i}} p_v$

        \For{\textbf{each} $v\in Start_\ls{i}$}
             \For{\textbf{each} $u\in\Child(v)$}
                \State propagationQueue.enqueue($\lr{u,p_v\cdot\text{weight}(v,u)}$)
             \EndFor
        \EndFor   
        
        \While{propagationQueue not empty}
        
        \State $\lr{u,p_u}\gets$ propagationQueue.dequeue()
        \If{$u.$configuration$[i]=\ls{i}$} 
            \For{\textbf{each} $w\in\Child(u)$}
            \State propagationQueue.enqueue( $\lr{w, {p_u\cdot\text{weight}(u,w)} }$ )
            \EndFor
        \Else \Comment{$u.$configuration$[i]\ne\ls{i}$}
            \State $\emph{SetOfChildren} \gets \emph{SetOfChildren} \cup \{u.\text{configuration}[i]\}$
            \If{the set $Start_{u.\text{configuration}[i]}$ does not exist} 
                \State  $Start_{u.\text{configuration}[i]} \gets \{\}$
            \EndIf  
            \State $Start_{u.\text{configuration}[i]} \gets Start_{u.\text{configuration}[i]} \cup \{\lr{u,p_u}\} $
             
        \EndIf 
        \EndWhile
        \State label node $v_{\ls{i}}$ with $\lr{\ls{i},p_\ls{i}}$
        
        
        \For{\textbf{each} $\ls{i}^j\in \emph{SetOfChildren}$} 
            \State create node $v_{\ls{i}^j}$ and add it to $\Child(v_\ls{i})$
            \State $\emph{NodesQueue}.enqueue(\langle v_{\ls{i}^j}, \ls{i}^j, Start_{\ls{i}^j} \rangle)$
        \EndFor

	\EndProcedure
	
	\Statex

%
%
%

		
%
%
	\end{algorithmic}
	\label{alg:try1}
\end{algorithm}

%
%


\section{Consensus with Qualitative Validity is Solvable}
\label{app:upper}

In this section we show that a sequential composition of two
known algorithms, from~\cite{raynal2017signature} and~\cite{VABA},
yields an Agreement protocol that satisfies the Agreement,
Probabilistic Termination, and \optimal{C} Validity properties.
We next overview the properties guaranteed by each of the algorithms,
then show how to combine them to achieve \optimal{C} Validity, and
finally prove correctness and analyse complexity.

\paragraph*{Overview of the RM protocol's properties.} 
The asynchronous Agreement protocol
proposed by Raynal and Mostefaoui (RM)~\cite{raynal2017signature}
satisfies Agreement, Probabilistic termination, Weak validity, and Non-intrusion against an adaptive adversary.
The Weak validity property is a variant of the first part of
the \optimal{C} Validity.
That is, if all input values are the same
($\maxmult{\Vin} = 3t+1$), then honest parties can only decide on
this value.
However, if parties start with different values ($\maxmult{\Vin} < 3t+1$),
then they are allowed to agree on a pre-defined $\bot$ value.
The Non-intrusion property requires that honest parties decides on
values in $\Vin \cup \{\bot\}$.
That is, honest parties never decide on a value promoted by the
adversary.

The complexity of RM~\cite{raynal2017signature} is
the following: the protocol (1) tolerates up to $t< n/3$ Byzantine
parties, (2) runs in expected constant number of rounds, and (3) sends
$O(n^2)$ words in $O(n^2)$ messages where a word contains a constant
number of signatures and values. 

\paragraph*{Overview of the AMS protocol's properties.}
The asynchronous agreement protocol proposed by Avraham,
Malkhi, and Spiegelman (AMS)~\cite{VABA} satisfies the Agreement,
Probabilistic termination, External validity, and Quality properties against an
adaptive adversary.
The External validity property requires that honest parties decide
on values that are valid by some external predicate.
The Quality property requires that the probability that the decision value is in
$\Vin$ is at least $1/2$.
The complexity of AMS is similar to that of RM.

\paragraph*{Sequential composition}
We show that a sequential composition \alg$^s$  of the
RM and AMS protocols satisfies Agreement, Probabilistic termination, and \optimal{C} Validity with an optimal resilience and complexity in an asynchronous
setting with an adaptive adversary.

The pseudocode of \alg$^s$  appears as Algorithm~\ref{alg:optimal}.
First, parties try to reach agreement via the
RM protocol and if its decision
value is not $\bot$, then the parties decide on this value.
Otherwise, they propose their input value in the AMS protocol and decide on its decision value.
Although the Weak validity property that is proved for RM in~\cite{raynal2017signature} does not imply the first part of \optimal{C} Validity, a small modification of the proof proves that the protocol indeed satisfies it.
Moreover, the non-intrusion property guarantees that even if
$\maxmult{\Vin} -f < 2t+1$ then, in ensembles
of \alg$^s$ with $f$ maximal corrupted parties and an input multiset $\Vin$, parties never decide in
line~\ref{line:decide1} of \Cref{alg:optimal} on a value that is not in
$\Vin$.
To prove that the protocol in \Cref{alg:optimal} satisfies \optimal{} validity, we need to
show that the AMS protocol satisfies the second part of
\optimal{C} Validity. 
That is, in every ensemble of \alg$^s$ with $f$ maximal corrupted parties and an input multiset $\Vin$,
 the probability to decide on a value in $\Vin$ is at least $1 - \frac{f}{n-t}$.
By the quality property of AMS we get a probability of at
least $1/2$ in the worst case when $f = t$. 
Below we overview the main part of the AMS algorithm and
prove that the protocol indeed satisfies a stronger property, i.e.,
the second part of \optimal{C} Validity.

\begin{algorithm}
\caption{An agreement algorithm with Qualitative validity: protocol for
party $p_i$}

\begin{algorithmic}[1]

	\State $y_i \gets $\emph{RM-propose}$(v_i)$
	\If{$y_i \neq \bot$}

		\State decide $y_i$
		\label{line:decide1}

	\Else

		\State decide \emph{AMS-propose}$(v_i)$

	\EndIf

\end{algorithmic}
\label{alg:optimal}
\end{algorithm}

\subsection{Analysis}

The Agreement and Probabilistic-termination properties of the
composition follows immediately from the ones in
RM and AMS since $y_i\neq \bot$ in line~2 is true or false for all
correct processes categorically.
As for \optimal{C} Validity, it follows from the weak validity proof in
RM that if $\maxmult{\Vin} -f \geq 2t+1$, then all honest parties that
decide, decide on values in $\Vin$.
Moreover, by the non-intrusion property of RM, honest parties can
only decide in line~\ref{line:decide1} on a value in $\Vin$.
Thus, to prove \optimal{C} Validity we need to show that AMS satisfies
that the probability to decide on a value in $\Vin$ is at least $1 -
\frac{f}{n-t}$.
Below we overview the relevant parts of AMS and then
prove that it indeed satisfies the required property.

In a nutshell, in every round of the AMS protocol,
parties concurrently promote their values via some broadcast algorithm
until at least $n-t$ broadcasts complete. 
Then, using a shared global coin, parties elect one broadcast instance
uniformly at random and ignore the rest.
If a completed broadcast is elected, then its value is \emph{fixed}
and parties will eventually decide on this value.
Otherwise, parties continue to the next round with either their value
from the previous round or with the value of the elected broadcast in
this round.
Given the above description we prove the following lemma:

\begin{lemma}

The probability to decide on a value proposed by an honest party in
the AMS protocol is at least $1 - \frac{f}{n-t}$.

\end{lemma}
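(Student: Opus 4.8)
The plan is to analyze the decision mechanism of the AMS protocol directly, focusing on the moment when a broadcast instance is elected via the shared coin. First I would set up the key combinatorial fact: in any round where the parties wait for $n-t$ broadcast instances to complete, the elected instance is chosen uniformly at random from among $n$ candidate instances (one per party), and only those from the $n-t$ completed broadcasts can be fixed. Since the coin is shared and uniform, each of the $n$ instances is elected with probability $\frac{1}{n}$. The crucial observation is that an honest party's value is ``safe'' in the sense that if an honest party's broadcast is elected, the decided value originates from an honest party. Out of the $n$ instances, at most $f$ belong to Byzantine parties, so at least $n-f$ belong to honest parties.

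The core of the argument is that the adversary can only influence the decision quality by controlling which broadcasts complete and by the values its $f$ corrupted parties inject. The worst case for validity is when the adversary arranges for as many Byzantine-proposed values as possible to be among the candidates that can be elected and fixed. I would argue that the probability of deciding on a Byzantine-originated value is at most $\frac{f}{n-t}$: conditioned on reaching the election step, the relevant instances are the $n-t$ completed ones, of which at most $f$ were promoted by Byzantine parties (since the adversary controls at most $f$ parties), and the coin selects uniformly among the $n-t$ completed instances. Hence the probability that a Byzantine instance is elected is at most $\frac{f}{n-t}$, so the complementary probability of deciding on an honest party's value is at least $1 - \frac{f}{n-t}$. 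I would then note that across multiple rounds this bound is preserved, because once an honest value is fixed in some round the property holds, and the per-round selection argument applies inductively to the value carried forward.

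The step I expect to be the main obstacle is handling the multi-round structure and the adaptive adversary correctly. The difficulty is that in rounds where no decision is fixed, parties carry forward either their own value or the elected broadcast's value, and one must ensure that the ``fair share'' argument composes across rounds without the adversary gaining an advantage by re-injecting bogus values in later rounds. To make this rigorous I would use probabilistic indistinguishability (via \Cref{cor:action}): I would compare the true ensemble with one in which the honest instances among the completed broadcasts are relabeled, and argue that the party cannot distinguish them, so the election probabilities are identical. This reduces the multi-round correlation to a clean per-round uniform-selection bound and lets me conclude that, in every ensemble of \alg$^s$ with $f$ maximal corrupted parties, the probability to decide on a value in $\Vin$ is at least $1 - \frac{f}{n-t}$, as required.
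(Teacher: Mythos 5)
Your per-round ``fair share'' computation is essentially sound, and it is in fact a reorganization of the calculation in the paper's proof: the paper writes the probability $Q$ of deciding a Byzantine-originated value as a geometric-type series $\frac{x_1}{n}\frac{f}{x_1} + \frac{n-x_1}{n}(\cdots)$ and bounds it by $\frac{f}{x} \le \frac{f}{n-t}$ with $x=\min_i x_i \ge n-t$, whereas you condition on the round in which a decision is fixed and bound each conditional probability by $\frac{f}{x_i}\le\frac{f}{n-t}$; these are the same estimate. One local inaccuracy: you first (correctly) say the coin is uniform over all $n$ instances, and later say it ``selects uniformly among the $n-t$ completed instances.'' Only the conditional version of the latter is true --- conditioned on the elected instance being completed, the law is uniform on the $x_i \ge n-t$ completed ones --- and your argument should be phrased that way.

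The genuine gap is exactly the step you flag as the main obstacle, and your proposed remedy does not close it. Your pivotal claim --- ``if an honest party's broadcast is elected, the decided value originates from an honest party'' --- is not automatic in AMS, because honest parties re-broadcast \emph{carried-forward} values: if in some round the elected instance is an \emph{uncompleted} broadcast of a Byzantine sender, honest parties may adopt its bogus value and promote it themselves in later rounds, after which electing an honest party's completed broadcast can still fix a Byzantine-originated value. Comparing the true ensemble with a ``relabeled'' one via probabilistic indistinguishability (\Cref{cor:action}) cannot repair this: indistinguishability equates the probabilities of election events across ensembles, but says nothing about whose input the honest instances are carrying within the given ensemble. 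The paper closes this hole with a different, purely protocol-level device: pessimistically assume that in every round \emph{all} Byzantine broadcasts complete before the election, and that electing any Byzantine instance immediately causes all honest parties to decide on its value. This assumption can only increase the probability being bounded, and under it any elected-but-uncompleted broadcast has an honest sender; a simple induction then shows that until a Byzantine instance is elected, every value broadcast or carried forward by honest parties originates from honest parties. Only with this coupling in place does your per-round conditional bound legitimately compose across rounds to give $Q \le \frac{f}{n-t}$.
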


\begin{proof}

We prove the lemma by showing that the probability to decide on a
value proposed only by Byzantine parties is at most $\frac{f}{n-t}$.
To bound this probability from above, we assume that if a broadcast by
a Byzantine sender is elected in some round $r$, then all honest
parties decide on its value in round $r$.
That is, we assume that all Byzantine parties complete their
broadcasts in all rounds before the parties randomly elect one broadcast
instance.
By the AMS protocol, for every round $r$, if a not completed 
broadcast is elected, then honest parties continue to the next round
with either their values from the previous round or with the value of
the elected broadcast.
Therefore, it follows by induction that in all rounds until a
broadcast with a Byzantine sender is elected honest parties broadcast
values proposed by honest parties.

Denote by $x_1,x_2,x_3,\ldots$ the number of completed broadcasts in
rounds $r_1,r_2,r_3,\ldots$, respectively.
Thus, for every round $r_i$, $\frac{x_i}{n}$ is the probability to
elect a completed broadcast and fix its decision value, whereas 
$\frac{n - x_i}{n}$ is the probability to elect an uncompleted broadcast
and continue to the next round without fixing a decision.
Let $Q$ be the probability to decide on a value proposed only by
Byzantine parties.
We get that:
\[Q=
\frac{x_1}{n}\frac{f}{x_1} +\frac{n-x_1}{n}(
\frac{x_2}{n}\frac{f}{x_2} + \frac{n-x_2}{n}(
\frac{x_3}{n}\frac{f}{x_3} + \frac{n-x_3}{n}(
\frac{x_4}{n}\frac{f}{x_4} + \frac{n-x_4}{n}(\ldots))))
\]
Let $x = min(\{x_i \mid i \in \mathbb{N}\})$, we get that
\[Q \leq
\frac{x}{n}\frac{f}{x} +\frac{n-x}{n}(
\frac{x}{n}\frac{f}{x} + \frac{n-x}{n}(
\frac{x}{n}\frac{f}{x} + \frac{n-x}{n}(
\frac{x}{n}\frac{f}{x} + \frac{n-x}{n}(\ldots)))) =
\]
\[
\frac{f}{n}(1+\frac{n-x}{n} + (\frac{n-x}{n})^2\ldots) =
\frac{f}{n}(\prod_{i=0}^{\infty} (\frac{n-x}{n})^i ) = 
\frac{f}{n}\frac{n}{n-(n-x)}=\frac{f}{x}
\] 

By the protocol, $x_i \geq n-t$ for every round $r_i$.
Therefore, $Q \leq \frac{f}{n-t}$.

\end{proof}

\paragraph*{Complexity.}
Since the sequential composition uses one instance of RM and one of
AMS, its asymptotic complexity and resilience are equal to that of RM
and AMS, which are proven to be optimal~\cite{FLP85,VABA,bracha1987asynchronous}.
That is, it (1) tolerates up to $t< n/3$ Byzantine parties, (2) runs
in expected constant number of rounds, and (3) sends $O(n^2)$ words
in $O(n^2)$ messages where a word contains a constant number of
signatures and values.

\section{Qualitative Validity vs Weak Validity}

A hasty reader might believe ``Weak Validity (plus agreement) is actually stronger than the first statement of  the Qualitative Validity''  which can lead him to the false conclusion that ``Qualitative Validity is incomparable to Weak Validity.''

Although it might not be obvious in a first glance, Qualitative Validity is strictly stronger than Weak Validity. Namely, (1) every algorithm that satisfies Qualitative Validity (QV) necessarily satisfies Weak Validity, while (2) there are algorithms that satisfy weak validity and do not satisfy QV. Part (2) follows immediately from the second part of the QV definition. Part (1) follows from \Cref{thm:lower}. In particular, \Cref{thm:lower} shows that if the first part of QV is not satisfied and $f > 0$, then there is always a positive probability to decide on a value not in $V_{in}$. For better transparency, we provide here a direct proof of the following:

\begin{claim}
	A protocol that satisfies Qualitative Validity, also satisfies Weak validity.
\end{claim}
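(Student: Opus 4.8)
The plan is to argue by contradiction and to exploit the fact, stressed in the note at the end of the problem definition, that an adversary's control over a Byzantine party is independent of that party's nominal input. Suppose some protocol \alg{} satisfies \optimal{C} Validity but violates Weak validity. Then there is an ensemble \scn{} of \alg{} in which every honest party proposes the same value $v$, yet some honest party decides a value $w\ne v$ with positive probability. Since at most $\newf\le t$ parties are corrupted and $n=3t+1$, at least $n-\newf\ge 2t+1$ parties are honest and all of them input $v$; in particular $v$ is the strict plurality of $\Vin$, but $\Vin$ may also contain the value $w$ if a Byzantine party happens to have been assigned it. This is precisely why \optimal{C} Validity cannot be applied to \scn{} directly: its first clause would only force the decision into $\Vin$, which need not exclude~$w$.

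The key step is to remove $w$ from the input multiset without changing anything the honest parties can observe. I would construct a second ensemble $\scn'$ that is identical to \scn{} except that the nominal input of every corrupted party is relabeled to $v$. The adversary keeps exactly the same strategy, so every message sent by every (honest or Byzantine) party is unchanged; only the entries of the root configuration belonging to corrupted parties differ. Honest parties never see those entries, so each honest party passes through exactly the same local states with exactly the same probabilities in both ensembles. Hence, by \Cref{def:PI}, \scn{} and $\scn'$ are probabilistically indistinguishable to every honest party, and by \Cref{cor:action} each honest party decides $w$ in $\scn'$ with the same positive probability as in \scn.

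Now \optimal{C} Validity can be applied to $\scn'$ with full force. In $\scn'$ every one of the $n$ parties proposes $v$, so $\Vin' = \multiset{v,\dots,v}$ and $\maxmult{\Vin'}=n=3t+1$. Because the number of corrupted parties is unchanged, $\newf'\le t$, whence $\maxmult{\Vin'}-\newf'\ge 3t+1-t = 2t+1$. Thus the first clause of \optimal{C} Validity applies to $\scn'$ and guarantees that every honest party that decides outputs a value in $\Vin'=\{v\}$, i.e.\ decides $v$ with probability~$1$. This contradicts the fact that an honest party decides $w\ne v$ with positive probability in $\scn'$, completing the proof.

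I expect the main obstacle to be justifying the indistinguishability step rigorously: one must argue that relabeling the corrupted parties' inputs yields a legitimate ensemble of \alg{} (the adversary can realize the very same behavior from the altered initial configuration) and that this relabeling leaves every honest party's induced local ensemble untouched. Once this is in place, the elegant point is that the relabeling forces $\Vin'$ to be a single value, moving us out of the probabilistic second clause of \optimal{C} Validity---which by itself only bounds, but does not forbid, bogus decisions---and into its deterministic first clause, which is exactly the guarantee that Weak validity requires.
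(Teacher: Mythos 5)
Your proposal is correct and follows essentially the same route as the paper's own proof: both construct a second ensemble by relabeling the corrupted parties' inputs to $v$ while keeping the adversary's behavior fixed, observe that the two ensembles are probabilistically indistinguishable to all honest parties, and then invoke the first (deterministic) clause of \optimal{C} Validity in the relabeled ensemble, where $\maxmult{\Vin'}-\newf \ge 2t+1$ forces the decision into $\{v\}$. The only difference is presentational (you argue by contradiction and transfer the bogus decision forward, whereas the paper argues directly and transfers the conclusion back), which changes nothing of substance.
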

\begin{proof}
	Let \alg{} be a protocol that satisfies QV, and let $\ens$ be any ensemble of \alg{} in which $2t+1$ honest parties start with~$v$. We show that~$v$ is the only possible decision in $\ens$. Consider another ensemble $\ens’$ which is the same as $\ens$ except that all corrupted parties (the same as parties in $\ens$) have $v$ as their input but act exactly as in ensemble $\ens$. In ensemble $\ens’$ we have $\maxmult{\Vin} \ge 2t+1+\newf$ and therefore $\maxmult{Vin} -\newf \ge 2t+1$. By the first condition of QV, we get that $v$ is the only possible decision in $\ens’$. Clearly, $\ens$ and $\ens’$ are probabilistically indistinguishable for all honest parties, hence, the only possible decision in $\ens$ is~$v$. 
\end{proof}

\end{document}